\newcommand{\figref}[1]{Fig.~\ref{#1}}
\newtheorem{theorem}{Theorem}
\newtheorem*{theorem*}{Theorem}
\newtheorem{lemma}{Lemma}
\newtheorem*{lemma*}{Lemma}
\newcommand{\remove}[1]{}
\def\vv{\boldsymbol{v}}
\def\DD{\boldsymbol{D}}
\def\II{\boldsymbol{I}}
\def\LL{\boldsymbol{\mathcal{L}}}
\def\TT{\boldsymbol{T}}
\def\VV{\boldsymbol{V}}
\def\SS{\boldsymbol{S}}
\def\ZZ{\boldsymbol{Z}}
\def\HH{\boldsymbol{H}}
\def\AA{\boldsymbol{A}}
\def\BB{\boldsymbol{B}}
\def\WW{\boldsymbol{W}}
\def\PP{\boldsymbol{P}}
\def\ttheta{\boldsymbol{\theta}}
\def\GGamma{\boldsymbol{\Gamma}}
\def\KK{\boldsymbol{K}}
\def\vol{{\rm vol}}
\def\cut{{\rm cut}}
\def\AAcal{\boldsymbol{\mathcal{A}}}
\begin{document}
\title{Graph Filters and the Z-Laplacian}

\author{\IEEEauthorblockN{Xiaoran Yan}
\IEEEauthorblockA{Indiana University Network Science Institute\\
Bloomington, IN 47408\\
Email: xiaoran.a.yan@gmail.com}
\IEEEauthorblockN{Brian M. Sadler, Robert J. Drost, Paul L. Yu}
\IEEEauthorblockA{Army Research Laboratory\\
Adelphi, MD 20783}
\IEEEauthorblockN{Kristina Lerman}
\IEEEauthorblockA{Information Science Institute, University of Southern California\\
Marina Del Rey, CA 90292}
}

\maketitle

\begin{abstract}
In network science, the interplay between dynamical processes and the underlying topologies of complex systems has \replaced{led}{lead} to a diverse family of models with different interpretations. In graph signal processing, this is manifested in the form of different graph shifts and their induced algebraic systems. In this paper, we propose the unifying Z-Laplacian framework\added{,} whose instances can act as graph shift operators. As a generalization of the traditional graph Laplacian, \added{the }Z-Laplacian spans the space of all possible Z-matrices, i.e., real square matrices \replaced{with nonpositive off-diagonal entries.}{whose off-diagonal entries are nonpositive.}  We show that \added{the }Z-Laplacian can model general \replaced{continuous-time}{continuous time} dynamical processes, including information flows and epidemic spreading on a given graph. It is also closely related to general nonnegative graph filters in the discrete time domain. We showcase its flexibility by considering two 
applications.  First, we consider a wireless communications networking problem modeled with a graph, where the framework can be applied to model the effects of the underlying communications protocol and traffic.  Second, we examine a structural brain network from the perspective of \replaced{low- to high-frequency}{low to high frequency} connectivity.
\end{abstract}


%
\IEEEpeerreviewmaketitle

\section{Introduction}
As a powerful representation for many complex systems, a network models entities and their interactions via vertices and edges. In the field of network science, studies of topological structures, including those of vertex centrality and community structure, have \replaced{led}{lead} to fundamental insights into the organization and function of social, biological\added{,} and technological systems~\cite{newman2010networks, bonacich1987power, Fortunato10}. To model dynamical properties on a given network, different dynamical processes can be defined \replaced{over}{on top of} fixed topologies. Recent studies have demonstrated the fundamental interplay between dynamical operators and centrality and community structure measures~\cite{Borgatti05, ghosh_rethinking_2012, Ghosh2014KDD}.

In signal processing\added{,} we see the parallel development of graph signal processing (GSP). Starting with simple consensus problems on networks, Olfati-Saber, Fax\added{,} and Murray \cite{olfati-saber_consensus_2007} developed methods that can be used to design and analyze distributed control of sensors, unmanned vehicles\added{,} and communications systems. Capable of dealing with directed networks, switching topologies\added{,} and time delays, their models are closely related to random walks on graphs. Sandryhaila and Moura \cite{sandryhaila_discrete_2013} proposed a more general framework for defining linear invariant filters based on graph shift operators. By generalizing the classical discrete signal processing framework to graph topologies, well-developed theories and techniques can be extended to analyze more-complex problems involving interconnected systems and relational data sets \cite{shuman_signal_2012}.

By connecting these ideas from signal processing and network science, we investigate the mathematical duality between random walk and consensus processes on networks. We propose a class of discrete graph shift operators that are capable of modeling dynamical processes\added{,} including information flows and epidemic spreading. We demonstrate that\added{,} in this framework, these shifts span the space of all nonnegative matrices. We then adapt the discrete filters to \replaced{continuous-time}{continuous time} settings, introducing the \emph{Z-Laplacian} with heterogeneous time delays. This theoretical generalization of the \emph{parameterized Laplacian} framework \added[remark={bms: [8] used to be cited here. I think Xiaoran means to cite [10] here, instead of [8].  In the earlier version we said our approach was compatible with the general framework of [8], but the parameterized Laplacian is from [10].}]{\cite{yan2016capturing}} opens the door to signal processing analysis of continuous dynamical systems 
on networks. 

The cross-disciplinary connection also allows us to apply the idea of dynamics modeling to signal processing problems. Under the \replaced{parameterized}{parametrized} Laplacian framework \cite{yan2016capturing}, we demonstrated that different parameterizations of random walk and consensus processes can lead to different perceived network structures on the same topology. In this paper, we make the following novel contributions:
\begin{itemize}
 \item We propose the general \replaced{continuous-time}{continuous time} Z-Laplacian framework.  The associated shifts span the space of \emph{Z-matrices}, which are \deleted{mathematically defined as }real square matrices \replaced{with nonpositive off-diagonal entries}{whose off-diagonal entries are nonpositive} \cite{fiedler2008special}. \replaced{The framework}{It} enables the modeling of epidemic and information diffusion (Theorems \ref{the:unification},\ref{the:general},\ref{the:general_cont}), unifying many existing linear operators in the literature.
 \item We connect \replaced{discrete- and continuous-time}{discrete and continuous time} dynamical processes to the \replaced{GSP}{graph signal processing} framework. In particular, we propose the \emph{Z-Laplacian} operators as graph shifts, leading to induced signal processing techniques with corresponding dynamical process interpretations.
 \item We provide two signal processing examples of how different graph shift choices can lead to different conclusions in real applications. 
 \begin{itemize}
    \item For a wireless communications network with a fixed topology, we use the framework to model the traffic patterns under different communications protocols, coupling \replaced{GSP}{graph signal processing} with underlying protocol strategies and enabling the study of their interplay. This example illustrates how \deleted{the }a dynamical process on a graph \replaced{can be}{is} altered depending on the underlying assumptions about the traffic, communications rate, and protocol.
    \item For structural brain networks, we use the framework to conduct frequency analysis from different dynamical perspectives, including information diffusion models made possible by the Z-Laplacian framework. \deleted[remark=bms: This is already well known and not a novel contribution.  Maybe we should drop this sentence.]{This example also illustrates how signal processing tools are useful in network analysis.}
 \end{itemize}
\end{itemize}

\section{Background}
\label{sec:background}
Classical discrete signal processing provides a wide range of tools to analyze data on regular structures, including filtering, transformation, compression, etc.\added{,} GSP applies these tools to signals on graphs with arbitrary topologies \cite{shuman_signal_2012}. 

Consider a directed graph $G=\{V,E,\AA\}$, where $V=\{v_1,v_2,...,v_N\}$ is the set of vertices, representing $N$ elements in the system, and $E$ is the set of edges that represent the \replaced{pairwise}{pair-wise} interactions between the vertices. The topological structure of the system is captured by the weighted adjacency matrix $\AA$. \added{(Throughout, we will refer to weighted adjacency matrices simply as adjacency matrices.) }The diagonal \replaced{in- and out-degree}{in and out degree} matrices are respectively defined \replaced{by}{as} \replaced{$[\DD_\text{out}]_{uu} = \sum_v \AA_{uv}$}{$[\DD_{out}]_{uu} = \sum_v \AA_{uv}$} and \replaced{$[\DD_\text{in}]_{vv} = \sum_u \AA_{uv}$}{$[\DD_{in}]_{vv} = \sum_u \AA_{uv}$}, where $\AA_{uv}$ is the $(u,v)$ element of matrix $\AA$. For undirected graphs, \replaced{$\DD_\text{in} = \DD_\text{out} = \DD$}{$\DD_{in} = \DD_{out} = \DD$}. 

We define a \emph{graph signal} $\theta : V\rightarrow \mathbb{R}$ as a mapping from $V$ to the real numbers.  We represent the graph signal at time step $n$ (or $t$ for continuous time) as a row vector $\ttheta(n)$,\footnote{In this paper, we adopt the Markov process convention, i.e., using row vertex signal vectors $\ttheta(n)$ and \replaced{right multiplying}{right-multiply} them by matrix operators, which contrasts with the algebraic convention we used in \cite{Ghosh2014KDD,yan2016capturing}. We will also use $a_{uv}=[\AA]_{uv}$ to represent entry $(u,v)$ of \replaced{a}{the} matrix $\AA$\added{ and $d_u = [\DD]_{uu}$ to represent entry $(u,u)$ of a diagonal matrix $\DD$}.} making the space of graphs signals identical to $\mathbb{R}^N$. A \emph{graph filter} \deleted{$h$ }is a mapping from one graph signal to another,
\begin{equation}
\ttheta(n+1) = \ttheta(n) \HH\;,
\end{equation}
where the filter is represented by an $N\times N$ matrix $\HH$. Moreover, just like in classical signal processing, any linear shift-invariant filter can be \replaced{expressed}{defined} as
\begin{align}
  \HH = \HH(\SS) = h_0 \II + h_1 \SS + h_2 \SS^2 + ... + h_l \SS^l\;,
\end{align}
where \replaced{$\SS$ is known as the graph shift operator corresponding to $\HH$, the $h_i$, $0\leq i\leq l$, are real coefficients, and $l$ is the order of the filter.}{$\SS$ is the graph shift operator and $h_0, h_1,... h_l$ are real coefficients and $l$ is the order of the filter.}  \added{The graph shift operator }$\SS$ is not only the building block of shift-invariant filters, it is also closely related \replaced{to}{with} the \replaced{notions}{notion} of frequency response, convolution\added{,} and Fourier transforms on graphs \cite{sandryhaila_discrete_2014}. In \cite{sandryhaila_discrete_2013}, Sandryhaila and Moura derived a formal algebra based on $\SS$, generalizing corresponding signal processing concepts to graph topologies.

There are at least two major definitions of the shift operator based on the graph adjacency matrix $\AA$ \replaced{or}{and} the (unnormalized) Laplacian matrix $\LL = \DD-\AA$,\deleted{ respectively,} and alternatives with other properties have also been proposed \cite{gavili_shift_2015}. Because different shift operator definitions lead to divergent tools and algorithms, practitioners face difficult choices when applying \replaced{GSP}{graph signal processing} techniques. It is thus crucial to develop a basic understanding of how graph shift operators differ and relate to each other. For this purpose, we connect to the ideas from the \replaced{parameterized}{parametrized} Laplacian framework \cite{yan2016capturing}, where different operators can be interpreted as variants of random walk and consensus processes.  We first reintroduce the framework in signal processing notation, described below and listed in Table \ref{tab:glossary}.
\begin{table}
\caption{Glossary of terms and notation}
\begin{minipage}{\textwidth}
\centering
\setlength{\tabcolsep}{3.3em}
\bgroup\def\arraystretch{1.2}
	\begin{tabular}{cl}
	\hline\hline
	Term & Description \\
	\hline
	Nonnegative matrix & A real matrix with all nonnegative entries\\	
	Z-matrix & A real matrix with all \replaced{off-diagonal}{non-diagonal} entries $\le 0$\\
	$\ttheta(t)$ &Graph signal at continuous time $t$ \\
	\replaced{$\ttheta^\text{CON}(t)$}{$\ttheta^{CON}(t)$} &Graph signal under consensus basis\\
	$\PP$ &Random walk operator \\
	\replaced{$\PP^\text{CON}$}{$\PP^{CON}$} &Consensus operator \\
	$\AA$ & Adjacency matrix of graph $G$ \\		
	\replaced{${\DD}_\text{out}$}{${\DD}_{out}$} &Diagonal degree matrix of $\AA$ \\
	$\WW$ & Transformed adjacency matrix of $\AA$ \\	
	\replaced{${\DD_{\WW}}_\text{out}$}{${\DD_{\WW}}_{out}$} &Diagonal degree matrix of $\WW$ \\
	$\VV_{\AA}$ & Diagonal matrix \replaced{w/}{with the} dominating eigenvector\added{ of $\AA$}\\
	$\ZZ $ & Diagonal replicating factor matrix\\
	$\TT $ & Diagonal delay factor matrix\\
	$\LL$	& General Laplacian operator (examples follow)\\
	Random walk Laplacian	& \replaced{$\II - \DD_\text{out}^{-1}\AA$}{$\II - \DD_{out}^{-1}\AA$}, given $\AA$\\
	Parameterized Laplacian & \replaced{$\TT^{-1}(\II - {\DD_{\WW}}_\text{out}^{-1}\WW)$}{$\TT^{-1}(\II - {\DD_{\WW}}_{out}^{-1}\WW)$}, given $\WW = \AA\BB$\\
	Replicator & Parameterized Laplacian with $\WW = \VV_{\AA}\AA\VV_{\AA}$\\
	Z-Laplacian & \replaced{$\TT^{-1}(\II - \ZZ \DD_\text{out}^{-1}\AA)$}{$\TT^{-1}(\II - \ZZ \DD_{out}^{-1}\AA)$}, given $\AA$\\
	\hline\hline
	\end{tabular}\egroup
\end{minipage}
\label{tab:glossary}
\end{table}

We start by representing a \replaced{discrete-time}{discrete time} random walk as a signal on a directed graph $G=\{V,E,\AA\}$:\deleted[remark=rjd:modified equation]{}
\begin{align}
\label{eq:RW}
\ttheta(n+1) &= \ttheta(n) \PP = \ttheta(n) \DD_\text{out}^{-1}\AA\;.
\end{align}
Here the update filter $\HH = \PP$ is an $N\times N$ row (right) stochastic matrix. The graph signal $\ttheta(n)$ represents the probability density of the random walk on each vertex at step $n$. 

A consensus process on a graph can be viewed as the dual of a random walk, given by \cite{yan2016capturing},\deleted[remark=rjd:modified equation]{}
\begin{align}
\label{eq:CON}
\ttheta^\text{CON}(n+1) &= \ttheta^\text{CON}(n) \PP^\text{CON} = \ttheta^\text{CON}(n) \AA\DD_\text{in}^{-1}\;.
\end{align}
Here, the filter \replaced{$H = \PP^\text{CON}$}{$H = \PP^{CON}$} is a column (left) stochastic matrix. Assuming \replaced{$\ttheta^\text{CON}(n=0)$}{$\ttheta^{CON}(n=0)$} is the initial signal, then at every time step each vertex updates its signal using the weighted average of its neighbors via multiplication by \replaced{$\PP^\text{CON}$}{$\PP^{CON}$}. Unlike the graph signal in a random walk, entries in \replaced{$\ttheta^\text{CON}$}{$\ttheta^{CON}$} can be arbitrary (negative or positive) real numbers, without \replaced{normalization}{normalizing} constraints. 

The parameterized Laplacian $\LL$ can represent a \replaced{continuous-time}{continuous time} random walk, as in the \deleted{following }differential equation\deleted{,}\deleted[remark=rjd:modified equation]{}
\begin{equation}
\label{eq:paraP}
 \frac{d\ttheta(t)}{dt} = -\ttheta(t)\LL = -\ttheta(t) \TT^{-1}(\II - {\DD_{\WW}}_\text{out}^{-1}\WW)\;,
\end{equation}
where $\WW = \AA\BB$ is a transformed adjacency matrix\replaced{, ${\DD_{\WW}}_\text{out}$ is the diagonal matrix with $[{\DD_{\WW}}_\text{out}]_{uu} = \sum_v [\WW]_{uv} = \sum_v [\AA\BB]_{uv}$, and $\BB$ and $\TT$ are matrix parameters discussed below.  We also use $\WW$ to refer to the corresponding transformed graph itself, so that ${\DD_{\WW}}_\text{out}$ is the degree matrix of $\WW$.}{ and the degree matrix ${\DD_{\WW}}_{out}$ is also defined accordingly as: $[{\DD_{\WW}}_{out}]_{uu}= \sum_v [\AA\BB]_{uv}$ (see Table 1).}

Compared with the random walk Laplacian \replaced{$\LL = \II-\DD_\text{out}^{-1}\AA$}{$\LL = \II-\DD_{out}^{-1}\AA$}, \deleted{Equation }\eqref{eq:paraP} has two additional parameter sets, $\BB$ and $\TT$. The diagonal matrix $\BB$ consists of vertex bias factors that alter the random walk trajectory by giving neighbors additional weights.  In a biased random walk, the transition probability from vertex $u$ to $v$\added{,} denoted \replaced{$P_{uv}^\text{BRW}$,}{$P_{uv}^{BRW}$} is multiplied by a target bias factor $b_v$. In the parameterized Laplacian framework \cite{yan2016capturing,yan_multi-layer_2017}, we introduced the idea of the \emph{bias transformation} to relate the biased random walk to an unbiased version.

\begin{lemma}[Bias transformation]
\label{th:transformB}
Any biased random walk on $G = (V,E,\AA)$, with the diagonal matrix $\BB$ specifying vertex bias factors $b_v$, is equivalent to an unbiased random walk on the transformed graph $\WW = \AA \BB$.  If $G$ is undirected, \added{then we instead consider }the transformed graph \deleted{is }$\WW = \BB \AA \BB$\added{ to maintain edges having equal weight in both directions}.
\end{lemma}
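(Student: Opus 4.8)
The plan is to verify the claimed equivalence by comparing transition matrices entry by entry. First I would make the definition of the biased random walk precise: starting from the unbiased walk on $G$ with transition probabilities $a_{uv}/\sum_w a_{uw}$, multiplying by the target bias factors $b_v$ and renormalizing gives
\begin{equation}
P_{uv}^\text{BRW} = \frac{a_{uv} b_v}{\sum_w a_{uw} b_w}\;,
\end{equation}
which is well defined whenever the bias factors are positive and every vertex has at least one out-neighbor, so that the denominators are strictly positive. I would record this positivity as a standing assumption, since it is also what makes ${\DD_{\WW}}_\text{out}^{-1}$ exist.

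Next, for the directed case, I would compute the entries of $\WW = \AA\BB$. Since $\BB$ is diagonal, $[\WW]_{uv} = [\AA\BB]_{uv} = a_{uv} b_v$, and hence the out-degree of $u$ in $\WW$ is $[{\DD_{\WW}}_\text{out}]_{uu} = \sum_w a_{uw} b_w$. Therefore the unbiased random walk operator on $\WW$, namely ${\DD_{\WW}}_\text{out}^{-1}\WW$, has $(u,v)$ entry $a_{uv}b_v/\sum_w a_{uw}b_w$, which is exactly $P_{uv}^\text{BRW}$. The two processes thus share the same one-step update filter $\HH = \PP$, which is the asserted equivalence.

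Finally, for the undirected case I would repeat the computation with $\WW = \BB\AA\BB$: here $[\WW]_{uv} = b_u a_{uv} b_v$, which is symmetric whenever $\AA$ is, so $\WW$ is a legitimate weighted undirected graph with each edge carrying equal weight in both directions. The out-degree of $u$ is then $[{\DD_{\WW}}_\text{out}]_{uu} = b_u \sum_w a_{uw} b_w$, so the common prefactor $b_u$ cancels under row normalization and ${\DD_{\WW}}_\text{out}^{-1}\WW$ again has $(u,v)$ entry $a_{uv}b_v/\sum_w a_{uw}b_w = P_{uv}^\text{BRW}$. Hence $\BB\AA\BB$ induces the same biased walk as $\AA\BB$ while additionally preserving symmetry.

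There is no genuinely hard step here; in each case the argument is a one-line diagonal-matrix computation. The only points needing care are bookkeeping ones: fixing the convention for ``biased'' (target versus source bias, and whether renormalization is applied), ensuring the normalizing out-degrees are nonzero, and stating precisely that ``equivalent'' means the induced discrete-time Markov chains — equivalently the update filters $\HH$ — coincide. I would also remark that in the undirected case $\AA\BB$ and $\BB\AA\BB$ yield identical random walks but differ as weighted graphs, only the latter being symmetric, which is why the symmetric form is adopted there.
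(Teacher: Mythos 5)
Your proposal is correct and follows essentially the same route as the paper's appendix proof: compute the entries of $\WW = \AA\BB$ (resp. $\BB\AA\BB$), row-normalize, and match the resulting transition probabilities to those of the biased walk, with the $b_u$ prefactor canceling in the undirected case. Your version is just slightly more explicit about the normalization constants and the positivity assumptions, whereas the paper argues up to proportionality.
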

\begin{proof}
 See appendix.
\end{proof}

The other diagonal matrix \replaced{parameter, $\TT$, effects time delays for the continuous-time random walk,}{$\TT$ controls the time delay of a continuous-time random walk}\footnote{Bias transformation (also called ``reweighing transformation'' in \cite{yan2016capturing}) applies to both \replaced{discrete- and continuous-time}{discrete and continuous time} dynamical processes\deleted{. See }\cite{lambiotte_laplacian_2008}.}\replaced{ providing inverse clock rates that control how long the walk stays at each vertex.}{, or inverse clock rate at which the random walk stays at each vertex.} \added{(}Without loss of generality, we constrain all the \added{diagonal} entries \replaced{$\tau_u = [\TT]_{uu}\geq 1$}{in $\TT$ with $\tau_u \geq 1$}.\added{)} In Section \ref{sec:continuous}, we will \replaced{justify}{demonstrate} this intuition by connecting \replaced{continuous-time}{continuous time} processes to their discrete counterparts. Delayed continuous-time random walks can be captured using the \emph{delay 
transformation}:
\begin{lemma}[Delay transformation]
\label{th:transformD}
Any unbiased \replaced{continuous-time}{continuous time} random walk on $G = (V,E,\AA)$, with the diagonal matrix $\TT$ specifying vertex delay factors $\tau_v$, is equivalent to a \replaced{continuous-time}{continuous time} random walk with \deleted{the }delay factors $\II$ on the transformed graph \replaced{$\WW = \DD_\text{out}(\TT- \II)+\AA$}{$\WW = \DD_{out}(\TT- \II)+\AA$}, where $\II$ is the identity matrix.
\end{lemma}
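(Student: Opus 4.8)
The plan is to show that the two continuous-time processes are governed by \emph{the same} linear generator, from which equivalence is immediate: both evolve by an ODE of the form $\tfrac{d\ttheta(t)}{dt} = -\ttheta(t)\LL$, so if the generators agree then the trajectories $\ttheta(t)$ agree for every initial signal. Thus the whole lemma reduces to the matrix identity
\[
 \TT^{-1}\bigl(\II - \DD_\text{out}^{-1}\AA\bigr) \;=\; \II - {\DD_{\WW}}_\text{out}^{-1}\WW,
 \qquad \WW = \DD_\text{out}(\TT-\II)+\AA,
\]
whose left-hand side is the unbiased parameterized Laplacian of $G$ with delay matrix $\TT$ (namely \eqref{eq:paraP} specialized to $\BB = \II$) and whose right-hand side is the ordinary random-walk Laplacian of the transformed graph $\WW$ with unit delays.

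The verification has two steps. First I would compute the out-degree matrix of $\WW$. Since $\DD_\text{out}$ and $\TT$ are diagonal, the correction term $\DD_\text{out}(\TT-\II)$ is diagonal with $u$-th entry $d_u(\tau_u-1)$, so it only adds self-loops; the $u$-th row sum of $\WW$ is therefore $d_u(\tau_u-1) + \sum_v a_{uv} = d_u(\tau_u-1) + d_u = d_u\tau_u$, giving ${\DD_{\WW}}_\text{out} = \DD_\text{out}\TT$. This is the one point that must be gotten exactly right, since the self-loop weights are precisely what rescale time by the factor $\tau_u$ at each vertex. Second, I substitute and use that diagonal matrices commute and that ${\DD_{\WW}}_\text{out}^{-1} = \TT^{-1}\DD_\text{out}^{-1}$:
\begin{align*}
 \II - {\DD_{\WW}}_\text{out}^{-1}\WW
 &= \II - \TT^{-1}\DD_\text{out}^{-1}\bigl(\DD_\text{out}(\TT-\II) + \AA\bigr) \\
 &= \II - \TT^{-1}(\TT-\II) - \TT^{-1}\DD_\text{out}^{-1}\AA \\
 &= \TT^{-1}(\II - \DD_\text{out}^{-1}\AA),
\end{align*}
using the cancellation $\TT^{-1}(\TT-\II) = \II - \TT^{-1}$ in the last line. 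This is exactly the left-hand side of the claimed identity.

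It then remains to check that $\WW$ is a legitimate (self-loop-allowing) weighted graph. Because the delay factors satisfy $\tau_u \ge 1$, the added diagonal weights $d_u(\tau_u-1)$ are nonnegative, so $\WW \ge 0$ entrywise; and the construction tacitly assumes every vertex has positive out-degree, so that $\DD_\text{out}$, hence $\TT^{-1}\DD_\text{out}^{-1}$, is invertible, exactly as in the underlying random walk. Note also that the argument never uses symmetry of $\AA$ (only the diagonal is modified), so it applies verbatim to directed $G$; if $\AA$ is symmetric then so is $\WW$, since the correction is diagonal. I do not expect a genuine obstacle here beyond this bookkeeping: the content is the observation ${\DD_{\WW}}_\text{out} = \DD_\text{out}\TT$, which is the continuous-time analogue of absorbing a vertex's holding time into a self-loop, and everything else is the short computation above.
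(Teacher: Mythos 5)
Your proposal is correct and follows essentially the same route as the paper's appendix proof: both reduce the lemma to the identity $\TT^{-1}(\II - \DD_\text{out}^{-1}\AA) = \II - {\DD_{\WW}}_\text{out}^{-1}\WW$ for $\WW = \DD_\text{out}(\TT-\II)+\AA$, the only difference being that you verify it in the reverse direction and state explicitly the key fact ${\DD_{\WW}}_\text{out} = \DD_\text{out}\TT$, which the paper uses implicitly when it passes from $\TT^{-1}$ terms to ${\DD_{\WW}}_\text{out}^{-1}$ terms. Making that degree computation explicit (and noting nonnegativity of the added self-loops under $\tau_u \ge 1$) is a welcome clarification but not a different argument.
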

\begin{proof}
 See appendix.
\end{proof}

Delay transformation enables us to view delay factors as self-loops, which can be absorbed into $\WW$. A \deleted{simple }special case is when $\TT = \alpha \II$ is a scalar matrix, which can be understood as rescaling the global clock rate, so that all delays are identical and equal to $\alpha$.

Beyond the bias and delay transformations, the full parameterized Laplacian framework also has a similarity transformation that unifies the random walk and consensus processes on undirected graphs.
\begin{lemma}[Similarity transformation]
\label{th:transformS}
Any \replaced{continuous-time}{continuous time} random walk on an undirected graph $G = (V,E,\AA)$\deleted{,} captured by the parameterized Laplacian $\LL = \TT^{-1}(\II - \DD_{\WW}^{-1}\WW)$\added{,} with the diagonal matrices $\BB$ and $\TT$ specifying vertex bias factors and vertex delay factors, is equivalent to a \replaced{continuous-time}{continuous time} dynamical process captured by the parameterized Laplacian $\LL = (\TT\DD_{\WW})^{-1+\rho}(\DD_{\WW} - \WW)(\TT\DD_{\WW})^{-\rho}$, up to a change of basis, \replaced{where}{and} $\rho$ is the basis parameter and $0\le\rho\le1$\added[remark={bms: We don't say what $\rho$ is until the next paragraph}]{ (described below)}.
\end{lemma}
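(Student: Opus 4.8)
The plan is to exhibit an explicit diagonal change of basis that conjugates the operator $\LL = \TT^{-1}(\II - \DD_{\WW}^{-1}\WW)$ into the symmetric-like form $\LL' = (\TT\DD_{\WW})^{-1+\rho}(\DD_{\WW} - \WW)(\TT\DD_{\WW})^{-\rho}$, and then to check that this conjugation is exactly the transformation induced on the signal by the stated change of basis. First I would write the continuous-time dynamics $\tfrac{d\ttheta(t)}{dt} = -\ttheta(t)\LL$ and introduce the transformed signal $\ttheta'(t) = \ttheta(t)\GGamma$ for a diagonal, invertible matrix $\GGamma$ to be determined. Differentiating gives $\tfrac{d\ttheta'(t)}{dt} = -\ttheta(t)\LL\GGamma = -\ttheta'(t)\,\GGamma^{-1}\LL\GGamma$, so the process in the new basis is governed by $\LL' = \GGamma^{-1}\LL\GGamma$. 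Thus the lemma reduces to the purely algebraic claim that, for an appropriate choice of $\GGamma$, $\GGamma^{-1}\TT^{-1}(\II - \DD_{\WW}^{-1}\WW)\GGamma = (\TT\DD_{\WW})^{-1+\rho}(\DD_{\WW}-\WW)(\TT\DD_{\WW})^{-\rho}$.

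Next I would guess the change of basis. Writing $\LL = \TT^{-1}\DD_{\WW}^{-1}(\DD_{\WW} - \WW)$ and noting that on an undirected graph $\WW = \BB\AA\BB$ is symmetric and $\DD_{\WW}$ is its (symmetric) degree matrix, the natural candidate is $\GGamma = (\TT\DD_{\WW})^{\rho}$, which is diagonal and invertible since $\TT$ and $\DD_{\WW}$ are positive diagonal matrices (using $\tau_u \ge 1$ and positivity of degrees). Then $\GGamma^{-1}\LL\GGamma = (\TT\DD_{\WW})^{-\rho}\,\TT^{-1}\DD_{\WW}^{-1}(\DD_{\WW}-\WW)\,(\TT\DD_{\WW})^{\rho}$. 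Since all the matrices $\TT$, $\DD_{\WW}$ are diagonal they commute with each other, so $(\TT\DD_{\WW})^{-\rho}\TT^{-1}\DD_{\WW}^{-1} = (\TT\DD_{\WW})^{-\rho}(\TT\DD_{\WW})^{-1} = (\TT\DD_{\WW})^{-1-\rho}$; combining with the trailing $(\TT\DD_{\WW})^{\rho}$ on the far right, and recalling that the middle factor $(\DD_{\WW}-\WW)$ does \emph{not} commute with $(\TT\DD_{\WW})^{\rho}$, I get $\GGamma^{-1}\LL\GGamma = (\TT\DD_{\WW})^{-1-\rho}(\DD_{\WW}-\WW)(\TT\DD_{\WW})^{\rho}$. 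This is not literally the claimed $\LL'$; matching exponents forces the substitution $\rho \mapsto -\rho$, i.e.\ the correct change of basis is $\GGamma = (\TT\DD_{\WW})^{-\rho}$, giving $\GGamma^{-1}\LL\GGamma = (\TT\DD_{\WW})^{-1+\rho}(\DD_{\WW}-\WW)(\TT\DD_{\WW})^{-\rho} = \LL'$, exactly the asserted form. I would then remark on the two boundary cases: $\rho = 0$ recovers the original random-walk parameterized Laplacian $\TT^{-1}(\DD_{\WW}-\WW)\DD_{\WW}^{-1}\cdot\DD_{\WW}$... more precisely $\TT^{-1}\DD_{\WW}^{-1}(\DD_{\WW}-\WW)$, while $\rho = 1$ yields $(\DD_{\WW}-\WW)(\TT\DD_{\WW})^{-1}$, a consensus-type (column-stochastic-generator) operator, so the parameter $\rho$ interpolates between the random walk and consensus duals; the midpoint $\rho = 1/2$ (when $\TT = \II$) is the symmetric normalized Laplacian.

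The one genuine subtlety — not a computational obstacle but a conceptual one — is making precise what ``up to a change of basis'' means and why the undirectedness hypothesis is needed. The algebra above works formally for any symmetric $\WW$; undirectedness is what guarantees $\WW$ is symmetric (so $\DD_{\WW,\text{in}} = \DD_{\WW,\text{out}} = \DD_{\WW}$ and the expression $\DD_{\WW}-\WW$ is the genuine symmetric Laplacian of the transformed graph) and hence that the conjugated operator $\LL'$, while not symmetric itself for $\rho \ne 1/2$, is similar to the symmetric matrix $(\TT\DD_{\WW})^{-1/2}(\DD_{\WW}-\WW)(\TT\DD_{\WW})^{-1/2}$ and therefore has real spectrum — this is the content that legitimizes calling it a ``change of basis'' of the same dynamical process. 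I would close by noting that $\GGamma = (\TT\DD_{\WW})^{-\rho}$ is well-defined and invertible precisely because $\TT$ has entries $\tau_u \ge 1 > 0$ and $\DD_{\WW}$ has strictly positive diagonal (no isolated vertices in the transformed graph), so the change of basis $\ttheta \mapsto \ttheta(\TT\DD_{\WW})^{-\rho}$ is a bona fide invertible linear map on $\mathbb{R}^N$, completing the argument.
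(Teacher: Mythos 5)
Your proposal is correct, and it is essentially the argument the paper intends: the paper itself only cites the similarity transformation of \cite{yan2016capturing}, and your explicit diagonal conjugation $\GGamma^{-1}\LL\GGamma$ with $\GGamma = (\TT\DD_{\WW})^{-\rho}$ (using that $\TT$ and $\DD_{\WW}$ are positive diagonal and commute, so $(\TT\DD_{\WW})^{\rho}\TT^{-1}\DD_{\WW}^{-1} = (\TT\DD_{\WW})^{-1+\rho}$) is exactly that similarity argument spelled out, consistent with the special case $\DD\LL\DD^{-1} = \II - \AA\DD^{-1} = \LL^{\text{CON}}$ displayed in the text. Your remarks on the role of undirectedness (symmetry of $\WW$, real spectrum via similarity to the $\rho=1/2$ symmetric form) and on invertibility of the change of basis are also consistent with the paper's usage of the random walk, consensus, and symmetric bases.
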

\begin{proof}
 See the ``similarity transformation'' in \cite{yan2016capturing}.
\end{proof}

\added{In particular, we recover the \emph{random walk basis} by setting $\rho = 0$, and the \emph{consensus basis} with $\rho = 1$. Another relevant case is the \emph{symmetric basis} with $\rho = 0.5$, which leads to a Laplacian operator $\LL^\text{SYM}$ represented by a symmetric matrix. In linear algebra, similarity is an equivalence relation for square matrices \cite{fiedler2008special}. Similar matrices share many key properties, including their rank, determinant\added{,} and eigenvalues. Eigenvectors are also equivalent under a change of basis. For a given initial signal on an undirected graph with $\TT = \II$ and $\WW=\AA$,} \deleted{and it follows that given the same initial signal and an undirected graph, }the random walk \eqref{eq:RW} and consensus process \eqref{eq:CON} become identical at every time step\added{,} up to a change of basis. This follows from\deleted[remark=rjd:modified equation]{}
\begin{align}
\DD\LL\DD^{-1} &= \DD(\II-\DD^{-1}\AA) \DD^{-1}\nonumber\\
&=\II - \AA\DD^{-1} = \II - \PP^\text{CON} = \LL^\text{CON}\;,
\end{align}
where we used the fact that \deleted{in undirected graphs, }$\AA = \AA^T$ and \replaced{$\DD_\text{in} = \DD_\text{out} = \DD$ for undirected graphs.}{$\DD_{in} = \DD_{out} = \DD$.}

Using bias, delay\added{,} and similarity transformations, the parameterization Laplacian framework unifies various linear operators and their associated centrality and community structure measures from the network science literature \cite{yan2016capturing}. In particular, here we introduce one special operator called the \emph{replicator}, which is related to epidemic models.

\begin{lemma}[Replicator operator]
\label{th:replicator}
If $G = (V,E,\AA)$ is undirected\deleted{,} and \replaced{$\vv_{\AA}$ is the eigenvector of the adjacency matrix $\AA$ associated with the largest eigenvalue $\lambda_\text{max}$}{$\lambda_{max}$ is the largest eigenvalue of the adjacency $\AA$ associated with the eigenvector $\vv_{\AA}$}, so that \replaced{$\vv_{\AA}\AA=\lambda_\text{max}\vv_{\AA}$}{$\vv_{\AA}\AA=\lambda_{max}\vv_{\AA}$}, a biased random walk with the diagonal matrix $\BB = \VV_{\AA}$, whose bias factors are the components of $\vv_{\AA}$\added{,}\footnote{\replaced{The replicator}{Replicator} operator, not to be confused with a ``replicating factor'', also defines the \replaced{maximum-entropy}{maximum entropy} random walk under the random walk basis \cite{gomez-gardenes_entropy_2008}.\deleted{rjd: Is this footnote attached in the best place?}}\deleted{,} is defined by the stochastic matrix \replaced{$\PP_{\WW}^\text{SYM} = \frac{1}{\lambda_\text{max}}\AA$}{$\PP_{\WW}^{\text{\textit{SYM}}} = \frac{1}{\lambda_{max}}\AA$} under the symmetric 
basis.\deleted[remark={bms: Not sure we define what ``the symmetric basis'' is...}]{}
\end{lemma}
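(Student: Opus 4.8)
The plan is to chain the bias transformation (Lemma~\ref{th:transformB}) with the similarity transformation (Lemma~\ref{th:transformS}), using the eigenvector equation to collapse the diagonal factors. Throughout I assume $G$ is connected, so that by Perron--Frobenius $\lambda_\text{max} > 0$ and $\vv_\AA$ may be taken with strictly positive entries; hence $\VV_\AA$ is invertible with a real diagonal square root, and everything below is independent of the scaling of $\vv_\AA$.

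First I would apply Lemma~\ref{th:transformB}. Since $G$ is undirected, the biased random walk with bias matrix $\BB = \VV_\AA$ is equivalent to an unbiased random walk on the symmetric transformed graph $\WW = \VV_\AA \AA \VV_\AA$, which is exactly the replicator graph in Table~\ref{tab:glossary}. Because $\AA$ is symmetric, $\vv_\AA \AA = \lambda_\text{max}\vv_\AA$ also gives $\AA \vv_\AA^T = \lambda_\text{max}\vv_\AA^T$, i.e.\ $\sum_v \AA_{uv}[\vv_\AA]_v = \lambda_\text{max}[\vv_\AA]_u$ for every $u$.

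Second I would compute the degree matrix of $\WW$. Using the last identity,
\begin{align}
[\DD_\WW]_{uu} &= \sum_v [\VV_\AA \AA \VV_\AA]_{uv} = [\vv_\AA]_u \sum_v \AA_{uv}[\vv_\AA]_v \nonumber \\
&= \lambda_\text{max}\,[\vv_\AA]_u^2 ,
\end{align}
so $\DD_\WW = \lambda_\text{max}\,\VV_\AA^2$ and $\DD_\WW^{-1/2} = \lambda_\text{max}^{-1/2}\,\VV_\AA^{-1}$. Then I would pass from the random walk basis ($\rho = 0$) to the symmetric basis via Lemma~\ref{th:transformS} with $\rho = 1/2$ and $\TT = \II$: the walk on $\WW$ is equivalent, up to that change of basis, to the process with Laplacian $\DD_\WW^{-1/2}(\DD_\WW - \WW)\DD_\WW^{-1/2} = \II - \PP_\WW^\text{SYM}$, where $\PP_\WW^\text{SYM} = \DD_\WW^{-1/2}\WW\DD_\WW^{-1/2}$. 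Substituting $\DD_\WW^{-1/2} = \lambda_\text{max}^{-1/2}\VV_\AA^{-1}$ and $\WW = \VV_\AA\AA\VV_\AA$, the $\VV_\AA$ factors cancel and
\begin{equation}
\PP_\WW^\text{SYM} = \lambda_\text{max}^{-1/2}\VV_\AA^{-1}\,\VV_\AA\AA\VV_\AA\,\lambda_\text{max}^{-1/2}\VV_\AA^{-1} = \frac{1}{\lambda_\text{max}}\AA ,
\end{equation}
which is the claim.

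The computation is short; the only real work is bookkeeping. I expect the main points of attention to be: (i) using the symmetric form $\WW = \BB\AA\BB$ of the bias transformation so that $\WW$ is undirected and Lemma~\ref{th:transformS} applies (this still produces the same random walk operator as the biased walk, since the extra factor of $\BB$ in the degree cancels); (ii) justifying positivity and invertibility of $\VV_\AA$ and $\DD_\WW$ so that $\DD_\WW^{-1/2}$ is meaningful; and (iii) noting that ``stochastic'' here is meant up to the change of basis, since $\frac{1}{\lambda_\text{max}}\AA$ is similar to, but not literally equal to, the row-stochastic matrix $\frac{1}{\lambda_\text{max}}\VV_\AA^{-1}\AA\VV_\AA$ (indeed $\vv_\AA^T$ is its eigenvector of eigenvalue $1$).
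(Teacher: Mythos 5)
Your proof is correct and follows essentially the same route as the paper: apply the undirected bias transformation to get $\WW = \VV_{\AA}\AA\VV_{\AA}$, use the eigenvector relation to obtain $\DD_{\WW} = \lambda_\text{max}\VV_{\AA}^2$, and then pass to the symmetric basis via the similarity transformation so the $\VV_{\AA}$ factors cancel, yielding $\PP_{\WW}^\text{SYM} = \frac{1}{\lambda_\text{max}}\AA$. The only cosmetic difference is that you invoke Lemma~\ref{th:transformS} directly with $\rho = 1/2$ (which is the correct value for the symmetric basis; the paper's stated ``$\rho=-1/2$'' performs the same conjugation by $\VV_{\AA}$), and your added remarks on Perron--Frobenius positivity and on ``stochastic up to a change of basis'' are sensible clarifications the paper leaves implicit.
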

\begin{proof}
By Lemma \ref{th:transformB}, the stochastic matrix of a biased random walk with $\BB = \VV_{\AA}$ is 
$$\PP_{\WW} = \DD_{\WW}^{-1} \VV_{\AA}\AA\VV_{\AA}\;,$$
where $\DD_{\WW}$ is the diagonal degree matrix of the transformed graph $\WW = \VV_{\AA}\AA\VV_{\AA}$. Because \replaced{${d_{\WW}}_i = \sum_j {\vv_{\AA}}_i a_{ij}{\vv_{\AA}}_j =  {\vv_{\AA}}_i\sum_j  a_{ij}{\vv_{\AA}}_j = \lambda_\text{max} {\vv_{\AA}}_i^2$}{${d_{\WW}}_i = \sum_j {\vv_{\AA}}_i a_{ij}{\vv_{\AA}}_j =  {\vv_{\AA}}_i\sum_j  a_{ij}{\vv_{\AA}}_j = \lambda_{max} {\vv_{\AA}}_i^2$}, we have \replaced{$\DD_{\WW} = \lambda_\text{max} \VV_{\AA}^2$}{$\DD_{\WW} = \lambda_{max} \VV_{\AA}^2$}, and thus
\replaced{$\PP_{\WW} = \frac{1}{\lambda_\text{max}}\VV_{\AA}^{-1}\AA\VV_{\AA}\;.$}{$\PP_{\WW} = \frac{1}{\lambda_{max}}\VV_{\AA}^{-1}\AA\VV_{\AA}\;.$}
The \replaced{continuous-time}{continuous time} counterpart of $\PP_{\WW}$ is represented by the \replaced{random}{Random} walk Laplacian $\II - \PP_{\WW}$. By setting $\rho = -1/2$, $\TT = \II$\added{,} and \replaced{$\DD_{\WW} = \lambda_\text{max} \VV_{\AA}^2$}{$\DD_{\WW} = \lambda_{max} \VV_{\AA}^2$}, \deleted{and }according to Lemma \ref{th:transformS}\deleted{,} we have
\replaced{$\LL_{\WW}^{\text{SYM}} = \VV_{\AA}(\II-\PP_{\WW})\VV_{\AA}^{-1} = \II - \frac{1}{\lambda_\text{max}}\AA\;,$}{$\LL_{\WW}^{\text{\textit{SYM}}} = \VV_{\AA}(\II-\PP_{\WW})\VV_{\AA}^{-1} = \II - \frac{1}{\lambda_{max}}\AA\;,$}
therefore \replaced{$\PP_{\WW}^{\text{SYM}} = \frac{1}{\lambda_\text{max}}\AA$}{$\PP_{\WW}^{\text{\textit{SYM}}} = \frac{1}{\lambda_{max}}\AA$}.
\end{proof}

\replaced{In the sequel}{The rest of the paper}, we \replaced{repeatedly use}{will keep using} these transformations to design flexible operators \replaced{that yield}{with} intuitive insight. While the similarity transformation becomes obsolete as we generalize to the Z-Laplacian, bias and delay transformations remain essential in practice for interpreting and comparing models on the same topology\deleted{ that provide}, as we will show in Sections \ref{sec:sample} and \ref{sec:brain}.

Following the graph filter framework \cite{sandryhaila_discrete_2013}, we consider both \replaced{$\PP^\text{CON}$}{$\PP^{CON}$} and $\PP$ as potential graph shift operators\deleted{,} with interpretable dynamical parameters. Because both shifts and their \replaced{continuous-time}{continuous time} counterparts follow the aforementioned transformations, they form an infinite family of graph shifts on a given graph $G = (V,E,\AA)$. Note that these operators all have a dominating eigenvalue of $1$, leading to an asymptotically stationary signal $\ttheta(n=\infty)$. During this process, the total signal \replaced{(i.e., the sum of the components of $\ttheta(n)$)}{$\sum \ttheta(n)$} is always conserved under the random walk basis, preventing them from modeling non-conservative processes that grow or shrink over time.

\section{Epidemic model and nonnegative filters}
In this section, we will generalize the random walk and consensus processes to \replaced{more-general}{more general} operators, \replaced{in the process unifying}{which will unify} nonnegative linear graph filters. We begin by recalling a classic epidemic model, the susceptible-infected-susceptible (SIS) model.

\subsection{Epidemic model on a graph}
To generalize beyond conservative dynamical processes, we first redefine the classic SIS epidemic model on \added{the} graph $G=\{V,E,\AA\}$ \cite{wang_epidemic_2003} using graph signals. \replaced{The graph signal $\ttheta^\text{SIS}(n)$}{$\ttheta^{SIS}(n)$} now represents the probabilities that each vertex is infected at step $n$, given by\deleted[remark=rjd: modified equation]{}
\begin{align}
\label{eq:SIS}
\ttheta^\text{SIS}(n+1) &= \ttheta^\text{SIS}(n) \HH^\text{SIS} \nonumber\\
		   &= \ttheta^\text{SIS}(n) (\mu\AA + (1-\beta)\II)\;.
\end{align}
Here, each vertex has two states, susceptible or infected. When a vertex is \emph{susceptible}, each of its neighbors will transition to the \textit{infected} state with virus infecting probability $\mu$. Once a vertex is \emph{infected}, it will return to the \textit{susceptible} state with virus curing probability $\beta$.

An important theorem about the SIS model is that its asymptotic behavior depends on the ratio $\mu/\beta$, or the \emph{effective transmissibility} of the virus. If the effective transmissibility is above the \deleted{the }epidemic threshold, namely the inverse of the largest eigenvalue of the adjacency $\AA$, it will spread to a significant portion of the network. Otherwise, it will eventually die out.

\subsection{Epidemic model filters}
To generalize \deleted{Equation }\eqref{eq:RW}, we introduce a uniform self-replicating factor $z$ after each random walk step\added{, resulting in the following update rule and corresponding difference equation}:
\begin{align}
\label{eq:epidemic}
[\ttheta'(n+1)] &=  [\ttheta'(n)] z\PP\nonumber\\
[\ttheta'(n+1) - \ttheta'(n)] &= [\ttheta'(n)](z\PP-\II)\;.
\end{align}
Compared with $\ttheta$, the signal vector $[\ttheta']$ in \deleted{Equation }\eqref{eq:epidemic} does not necessarily sum to $1$, so it is more general and capable of modeling dynamical processes like information and epidemic spreading. 

\replaced{The difference equation in \eqref{eq:epidemic} provides intuition as to the corresponding dynamics.}{To better demonstrate the ideas and develop intuition, we have rearranged the update rule into a difference equation in Equation \eqref{eq:epidemic}.} For the uniform self-replicating factor $z$, the corresponding growth rate is actually $(z-1)$ for all vertices in the network. With $z=1$, we have no replications and recover the conservative random walk process; with $z>1$, we have an expanding process \replaced{in which}{where} the incoming probability flow is scaled by $z$ while the outgoing flow remains $1$; and \replaced{with $z<1$, we have a shrinking process.}{we have a shrinking process if $z<1$.} \replaced{Also note that in all cases $z$ is}{Notice that $z$ is also} the dominating eigenvalue of $z\PP$\added{,} corresponding to the eigenvector $\boldsymbol{1}$\replaced{, the all-ones vector.}{with all in all cases.} In practice we often restrict $z\geq 0$ so that the signal vector $[\ttheta']
$ always has a positive sum, which converges to $0$ in shrinking processes.

For undirected graphs, we can rewrite \deleted{Equation }\eqref{eq:epidemic} as\deleted{,}\deleted[remark=rjd: modified equation]{}
\begin{align}
\left[\ttheta'(n+1)\right]^{\text{SYM}} &= [\ttheta'(n)]^{\text{SYM}} \left(\frac{z}{\lambda_\text{max}}\AA\right) \nonumber\\
&= [\ttheta'(n)]^{\text{SYM}}\left(\frac{z}{\lambda_\text{max}}\AA^0 + (1-\beta)\II\right)\;,
\end{align}
where we have substituted in the replicator operator under the symmetric basis (see Lemma \ref{th:replicator}), with \replaced{$\left[\ttheta'(n+1)\right]^{\text{SYM}} = \VV_{\AA} \left[\ttheta'(n+1)\right] \VV_{\AA}^{-1}$}{$\left[\ttheta'(n+1)\right]^{\text{\textit{SYM}}} = \VV_{\AA} \left[\ttheta'(n+1)\right] \VV_{\AA}^{-1}$}, and we have replaced \replaced{$\AA = \AA^0+\frac{\lambda_\text{max}(1-\beta)}{z}\II$}{$\AA = \AA^0+\frac{\lambda_{max}(1-\beta)}{z}\II$} to match the SIS epidemic model defined on adjacency matrix $\AA^0$\added{,} as in \eqref{eq:SIS}. Here, the virus infecting probability from neighbors corresponds to \replaced{$\frac{z}{\lambda_\text{max}}$}{$\frac{z}{\lambda_{max}}$}, and the virus curing probability at an infected vertex corresponds to $\beta$. Their ratio, the effective transmissibility, is \replaced{$\frac{z}{\lambda_\text{max}\beta}$}{$\frac{z}{\lambda_{max}\beta}$}, which determines how the epidemic will spread on the adjacency matrix $\AA^0$.  


With $0\leq\beta\leq 1$, the inverse of the classic epidemic threshold, i.e., the dominating eigenvalue of $\AA^0$, is \replaced{$\frac{z+\beta-1}{z}\lambda_\text{max}$}{$\frac{z+\beta-1}{z}\lambda_{max}$}. If the effective transmissibility \replaced{$\frac{z}{\lambda_\text{max}\beta}$}{$\frac{z}{\lambda_{max}\beta}$} is greater than the threshold \replaced{$\frac{z}{(z+\beta-1)\lambda_\text{max}}$}{$\frac{z}{(z+\beta-1)\lambda_{max}}$}, or simply $z>1$, we recover the definition of an expanding process. Similarly, conservative and shrinking processes are recovered with \replaced{$z=1$ and $z<1$}{$z=1, z<1$}, respectively.

If we apply the duality between the random walk and consensus models to the non-conservative process in \deleted{Equation }\eqref{eq:epidemic}, we have\deleted[remark=rjd: modified equation]{} 
\begin{align}
\label{eq:dualUni}
[\ttheta'(n+1)] &=  [\ttheta'(n)]z \DD_\text{out}^{-1}\AA \nonumber\\
[\ttheta'(n+1)]^\text{CON} &= [\ttheta'(n)]^\text{CON}z \AA\DD_\text{in}^{-1} \;.
\end{align}
Here\added{,} the dual is a graph filter where each vertex first updates its signal using the weighted neighbor average\replaced{ and}{,} then the average signal is amplified by a factor $z$, leading to \replaced{$[\ttheta'(n=\infty)]^\text{CON} = \boldsymbol{0}$}{$[\ttheta'(n=\infty)]^{CON} = \boldsymbol{0}$} with $z<1$\deleted{,} or \replaced{$[\ttheta'(n=\infty)]^\text{CON} = \boldsymbol{\infty}$}{$[\ttheta'(n=\infty)]^{CON} = \boldsymbol{\infty}$} with $z>1$.

\subsection{General nonnegative graph filters}
To further generalize beyond epidemics with uniform self-replications, consider the following update and difference equations:
\begin{align}
\label{eq:general}
[\ttheta(n+1)] &= [\ttheta(n)] \ZZ \PP \nonumber\\
[\ttheta(n+1) - \ttheta(n)] &=  [\ttheta(n)] (\ZZ \PP -\II)\;,
\end{align}
where we have used a positive diagonal matrix $\ZZ$\added{,}\footnote{The replicating factors play the same role in the \replaced{continuous-time}{continuous time} Z-Laplacian, thus the notation $\ZZ$ here.}\deleted{,} whose diagonal elements $[\ZZ]_{vv}$ model a shrinking or expanding \emph{replicating factor} for each vertex v. The random walk step is now followed by a vertex specific replicating process, with a generally non-uniform replicating factor specified by $\ZZ$.

Applying the same duality between consensus and random walk processes to this more general dynamical process, we have\deleted[remark=rjd: modified equation]{} 
\begin{align}
\label{eq:genDual}
[\ttheta(n+1)] &= [\ttheta(n)]\ZZ \DD_\text{out}^{-1}\AA \nonumber\\
[\ttheta(n+1)]^\text{CON} &= [\ttheta(n)]^\text{CON} \AA\DD_\text{in}^{-1}\ZZ \;.
\end{align}
To interpret the operator \replaced{$\AA\DD_\text{in}^{-1}\ZZ$}{$\AA\DD_{in}^{-1}\ZZ$}, we focus on the dynamics of a specific vertex $u$, given by\deleted[remark=rjd: modified equation]{}
\begin{align}
[\ttheta(n+1)]_u^\text{CON} = \sum_v [\ttheta(n)]^\text{CON}_v [\AA]_{vu} [\DD_\text{in}]_u^{-1}[\ZZ]_{uu} \;,
\end{align}
where \replaced{$[\AA]_{vu} [\DD_\text{in}]_u^{-1}$}{$\AA_{vu} [\DD_{in}]_u^{-1}$} forms a weighted probability distribution over all incoming neighbors of $u$. Notice that the replicating factor of the incoming signals $[\ZZ]_{uu}$ only depends on the target vertex $u$. Compared with the uniform replicating factor $z$ in \eqref{eq:dualUni}, the order of matrix \replaced{multiplication by $\ZZ$}{$\ZZ$ multiplication} now matters. Under the consensus model, with vertex $u$ averaging the signals of neighbors $v$, all signals are multiplied by the same factor of $[\ZZ]_{uu}$, whereas this factor is $v$ dependent under the random walk basis.

Unlike the less general filters we have discussed previously, both operators in \eqref{eq:genDual} span the same vector space. This equivalence is easiest to show if we have an undirected graph and let\footnote{Mathematically\added{,} one has the liberty to manipulate the parameters in $\ZZ$ and $\TT$. In practice, however, we suggest setting them based on domain knowledge for intuition and interpretation.} $\ZZ = \DD$, leading to\deleted[remark=rjd: modified equation]{}
\begin{align}
\AA\DD_\text{in}^{-1}\ZZ = \AA\DD^{-1}\DD = \DD\DD^{-1}\AA = \ZZ \DD_\text{out}^{-1}\AA \;,
\end{align}
where we used $\AA = \AA^T$\deleted{,} and \replaced{$\DD=\DD_\text{in}=\DD_\text{out}$}{$\DD=\DD_{in}=\DD_{out}$}.

In fact, for general directed graphs, the vector space spanned by both operators contains all possible nonnegative matrices, and we call them \emph{general nonnegative filters}. To prepare for this theoretical unification, we first consider the following lemma regarding the adjacency matrix and random walks.

\begin{lemma}[Adjacency mapping]
\label{the:mapping}
For every directed weighted graph $G = \{V,E,\AA\}$, 
  there is a unique transition matrix, \replaced{$\PP_{\AA} = \DD_\text{out}^{-1}\AA $}{$\PP_{\AA} = \DD_{out}^{-1}\AA $},
  that captures an unbiased random walk on $\AA$.
Conversely, given a stochastic matrix $\PP$, there is an infinite family of adjacency matrices $\AAcal_{\PP}$ whose random walk process
  is consistent with $\PP$, denoted as
\[
\AAcal_{\PP} = \{\GGamma \PP : \mbox{$\GGamma$ is a positive diagonal matrix}.\}
\]
\end{lemma}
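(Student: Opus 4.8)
The plan is to prove the two halves of the statement separately. The forward direction is essentially a restatement of the definition of an unbiased random walk plus a one-line stochasticity check, while the converse is a short linear-algebra computation establishing the set equality $\{\AA : \DD_\text{out}^{-1}\AA = \PP\} = \AAcal_{\PP}$.

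\textbf{Forward direction.} Assuming, as usual, that $G$ has no sink vertices so that $\DD_\text{out}$ is invertible, I would note that $\PP_{\AA} = \DD_\text{out}^{-1}\AA$ has $(u,v)$ entry $\AA_{uv}/[\DD_\text{out}]_{uu} \ge 0$, and summing over $v$ gives $\sum_v \AA_{uv}/[\DD_\text{out}]_{uu} = 1$ by definition of $\DD_\text{out}$; hence $\PP_{\AA}$ is row-stochastic. It captures \emph{the} unbiased walk on $\AA$ because ``unbiased'' means exactly that from each $u$ the walker moves to $v$ with probability proportional to $\AA_{uv}$, and the normalization-to-one constraint then pins down the proportionality constant, giving uniqueness.

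\textbf{Converse direction.} Fix a row-stochastic $\PP$. First, for any positive diagonal $\GGamma$, the matrix $\AA = \GGamma\PP$ is nonnegative and its out-degree matrix satisfies $[\DD_\text{out}]_{uu} = \sum_v [\GGamma\PP]_{uv} = [\GGamma]_{uu}\sum_v [\PP]_{uv} = [\GGamma]_{uu}$ (using that $\PP$ is stochastic), so $\DD_\text{out} = \GGamma$ and therefore $\PP_{\AA} = \DD_\text{out}^{-1}\AA = \GGamma^{-1}\GGamma\PP = \PP$; thus every element of $\AAcal_{\PP}$ is consistent with $\PP$. Conversely, if $\AA$ is any (sink-free) adjacency matrix with $\DD_\text{out}^{-1}\AA = \PP$, then $\AA = \DD_\text{out}\PP$ with $\DD_\text{out}$ a positive diagonal matrix, so $\AA \in \AAcal_{\PP}$. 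This gives the claimed equality, and the family is infinite since, e.g., $\GGamma = c\II$ for each $c>0$ produces the distinct matrices $c\PP$.

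\textbf{Main obstacle.} There is no substantive obstacle; the only delicate point is the degenerate case of vertices with zero out-degree, where $\DD_\text{out}$ fails to be invertible and the transition probabilities out of such a vertex are undefined. I would dispose of this by the standing assumption that $G$ has no sinks (or, equivalently, the standard convention of adjoining a self-loop at each sink), which keeps both $\DD_\text{out}^{-1}$ and $\GGamma^{-1}$ well defined throughout the argument.
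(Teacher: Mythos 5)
Your proof is correct and follows essentially the same idea as the paper's own (much briefer) argument: uniqueness of $\PP_{\AA}$ because $\DD_\text{out}$ is determined by $\AA$, and the converse family arising from the row-scaling freedom captured by $\GGamma$. Your version is simply more explicit, verifying both inclusions (in particular that $\DD_\text{out}=\GGamma$ for $\AA=\GGamma\PP$) and flagging the sink-vertex caveat, which the paper leaves implicit.
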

\begin{proof}
 Since \replaced{$\DD_\text{out}$}{$\DD_{out}$} is uniquely determined by a given $\AA$, then every directed network uniquely defines a random walk process. However, given a transition matrix $\PP$, there remains $N$ degrees of freedom to specify the underlying network. Intuitively, the degrees of freedom can be interpreted as row scalings that proportionally multiply all outgoing edges of a vertex, thus preserving the random walk distribution leaving from the given vertex. Here we represent these degrees of freedom using the entries of $\GGamma$.
\end{proof}

Now we are ready to prove the unification theorem for nonnegative filters expressed with an arbitrary basis\replaced{.}{,}
\begin{theorem}[Basis Unification]
\label{the:unification}
For any general nonnegative filter defined as \replaced{$\ZZ \DD_\text{out}^{-1}\AA$}{$\ZZ \DD_{out}^{-1}\AA$} on graph $G = \{V,E,\AA\}$, there is an equivalent dual filter defined as $({\DD'}^{-1}\ZZ')^{1-\rho}\AA'({\DD'}^{-1}\ZZ')^{\rho}$ on a dual graph $\AA' \in \AAcal_{\PP}$, under any given basis parameter $0\le\rho\le 1$.
\end{theorem}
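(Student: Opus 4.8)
The plan is to follow the template of the similarity transformation (Lemma~\ref{th:transformS}): exhibit explicit choices of the dual graph $\AA'$ and the replicating matrix $\ZZ'$, and then verify the claimed equivalence by a single diagonal change of basis. Write $\PP = \DD_\text{out}^{-1}\AA$ for the random walk operator, so that the given filter is $\ZZ\PP$. By Lemma~\ref{the:mapping}, every element of $\AAcal_{\PP}$ has the form $\GGamma\PP$ for a positive diagonal $\GGamma$, and such a graph has out-degree matrix ${\DD'}_\text{out} = \GGamma$ because $\PP$ is row-stochastic. I would take $\ZZ' = \ZZ$, pick $\AA' = \GGamma\PP \in \AAcal_{\PP}$ for a $\GGamma$ chosen below, and set $\DD' = \GGamma$, so that ${\DD'}^{-1}\AA' = \PP$.

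Next, abbreviate $\boldsymbol{Q} = {\DD'}^{-1}\ZZ' = \GGamma^{-1}\ZZ$, a positive diagonal matrix; the candidate dual filter is then $\HH_\rho = \boldsymbol{Q}^{1-\rho}\AA'\boldsymbol{Q}^{\rho} = \boldsymbol{Q}^{1-\rho}\GGamma\PP\boldsymbol{Q}^{\rho}$. Since every factor other than $\PP$ is diagonal, and diagonal matrices commute, $\boldsymbol{Q}^{\rho}\HH_\rho\boldsymbol{Q}^{-\rho} = \boldsymbol{Q}^{\rho}\boldsymbol{Q}^{1-\rho}\GGamma\PP\boldsymbol{Q}^{\rho}\boldsymbol{Q}^{-\rho} = \boldsymbol{Q}\GGamma\PP = \GGamma^{-1}\ZZ\GGamma\PP = \ZZ\PP$. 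A change of basis $\ttheta \mapsto \ttheta\,\SS$ with $\SS$ diagonal sends a filter $\HH$ to $\SS^{-1}\HH\SS$, so, taking $\SS = \boldsymbol{Q}^{\rho} = ({\DD'}^{-1}\ZZ')^{\rho}$, the identity above says precisely that $\HH_\rho = \SS^{-1}(\ZZ\PP)\SS$: the dual filter is equivalent to the given filter up to a change of basis, for every $\rho \in [0,1]$. At $\rho = 0$ the dual filter reads $\ZZ'{\DD'}_\text{out}^{-1}\AA'$, a general nonnegative filter on $\AA'$ in the random walk basis; at $\rho = 1$ it reads $\AA'{\DD'}^{-1}\ZZ'$, its consensus dual; intermediate $\rho$ interpolate between these, which is the statement to be proved.

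The step that needs care is the choice of $\GGamma$: for the single matrix $\DD'$ appearing in the formula to serve simultaneously as the out-degree and the in-degree of $\AA'$ — so that the two endpoints are literally the random-walk and consensus operators on the one graph $\AA'$ — one should take $\GGamma = \mathrm{diag}(\ppi)$ with $\ppi$ the stationary distribution of $\PP$, i.e. $\ppi^{T}\PP = \ppi^{T}$; then the in-degrees of $\AA' = \GGamma\PP$ are $\ppi^{T}\PP = \ppi^{T}$, matching the out-degrees $\GGamma$, so $\AA'$ is balanced. I expect this balancing step (together with the accompanying check that $\ppi$ can be taken strictly positive so that $\GGamma$ is an admissible positive diagonal matrix — automatic when $G$ is strongly connected, and otherwise handled on each strongly connected component) to be the only real obstacle; the remainder is the diagonal-conjugation bookkeeping sketched above, and it specializes to the undirected $\DD\LL\DD^{-1}$ computation of the excerpt when $\ZZ = \DD$. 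If one is content to read $\DD'$ as ${\DD'}_\text{out}$ throughout, then any $\GGamma$ works — in particular $\GGamma = \DD_\text{out}$, giving $\AA' = \AA$ itself — and the argument is immediate.
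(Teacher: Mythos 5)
Your algebra checks out, but you take a genuinely different route from the paper, and the relation you prove is, by default, weaker. The paper's proof collapses the problem: it sets $\ZZ' = \DD'$, so that the conjugating factor ${\DD'}^{-1}\ZZ'$ is the identity and the dual filter is just $\AA'$ for every $\rho$, and then takes $\AA' = \ZZ\PP_{\AA}\in\AAcal_{\PP}$ via Lemma~\ref{the:mapping} with $\GGamma=\ZZ$; the two filters are then literally \emph{equal}, which is the sense of ``equivalent'' the paper's proof uses (the exponent in its intermediate line is off by one, but this is immaterial once $\ZZ'=\DD'$). You instead keep $\ZZ'=\ZZ$, take $\AA'=\GGamma\PP$ with $\DD'=\GGamma$, and show by diagonal conjugation that the dual filter is \emph{similar} to $\ZZ\PP$, i.e., equivalent only up to a change of basis in the sense of Lemma~\ref{th:transformS}. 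This buys a whole family of admissible dual graphs (any positive diagonal $\GGamma$) and makes the role of $\rho$ transparent, but if ``equivalent'' is read as equality you need the one extra observation you did not make: choosing $\GGamma=\ZZ$ gives $\GGamma^{-1}\ZZ=\II$, the conjugation disappears, and your construction reduces exactly to the paper's. Your balancing digression (taking $\GGamma=\mathrm{diag}(\ppi)$ with $\ppi$ the stationary distribution so that $\DD'$ is simultaneously the in- and out-degree matrix of $\AA'$, with the attendant positivity/connectivity caveats) is unnecessary for the theorem: the paper never requires $\AA'$ to be balanced, and the consensus endpoint is handled after the theorem simply by letting $\DD'={\DD_{\AA'}}_\text{in}$ with $\ZZ'=\DD'$, which again collapses the conjugating factor. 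In short: correct, containing the paper's argument as the $\GGamma=\ZZ$ special case, but organized around similarity where the paper establishes direct equality.
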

\begin{proof}
Assume we have two equivalent filters defined on two different graphs $\AA$ and $\AA'$. Then\deleted[remark=rjd: modified equation]{}
\begin{align}
  \ZZ \DD_\text{out}^{-1}\AA &= ({\DD'}^{-1}\ZZ')^{1-\rho}\AA'({\DD'}^{-1}\ZZ')^{\rho}\nonumber\\
   \AA' &=  ({\DD'}^{-1}\ZZ')^{\rho} \ZZ \PP_{\AA} ({\DD'}^{-1}\ZZ')^{-\rho}\;.
\end{align}
Setting $\ZZ' = {\DD'}$, we have $\AA'= \ZZ \PP_{\AA}$. Applying Lemma \ref{the:mapping}, the dual graph $\AA'$ is a row scaled version of $\AA$ by setting $\GGamma=\ZZ$.
\end{proof}
By Theorem \ref{the:unification}, the duality in \eqref{eq:genDual} essentially leads to another filter on a different graph. When \replaced{$\rho = 1$ and $\DD' = {\DD_{\AA'}}_\text{in}$}{$\rho = 1, \DD' = {\DD_{\AA'}}_{in}$}, we have the equivalence between the consensus and random walk processes. As a result, we will no longer need to separately specify dynamical processes in random walk, consensus\added{,} or any other basis. We will thus suppress the superscript notations on filters and signals in the following sections.

Next, we prove that general nonnegative filters span all possible nonnegative matrices:
\begin{theorem}[Generality Theorem]
\label{the:general}
Given an arbitrary nonnegative matrix $\HH$, we can associate it with a general nonnegative filter \replaced{$\HH = \ZZ \DD_\text{out}^{-1}\AA$}{$\HH = \ZZ \DD_{out}^{-1}\AA$}, which models a general dynamical process on a graph $G = \{V,E,\AA\}$.
\end{theorem}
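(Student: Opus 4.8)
The plan is to read $\ZZ$ and $\AA$ directly off the target matrix $\HH$ and then verify the identity, using Lemma~\ref{the:mapping} to certify that the factor we peel off is a genuine random-walk operator on a genuine graph. The key structural observation I would start from is that, for any adjacency matrix $\AA$ whose row sums are all strictly positive, $\PP_{\AA} = \DD_\text{out}^{-1}\AA$ is row-stochastic, so the $u$th row of $\ZZ\PP_{\AA}$ sums to $[\ZZ]_{uu}$. Hence the diagonal of $\ZZ$ is forced by the target: if $\HH = \ZZ\DD_\text{out}^{-1}\AA$ then necessarily $[\ZZ]_{uu} = \sum_v[\HH]_{uv} =: z_u$, the $u$th row sum of $\HH$.

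Next, assuming every $z_u$ is positive, I would set $\ZZ = \mathrm{diag}(z_1,\dots,z_N)$ and $\PP = \ZZ^{-1}\HH$. Then $\PP$ is nonnegative with all row sums equal to $1$, i.e.\ a stochastic matrix, so Lemma~\ref{the:mapping} applies: there is an adjacency matrix $\AA \in \AAcal_{\PP}$ with $\PP_{\AA} = \PP$. Concretely one may take $\GGamma = \ZZ$, i.e.\ $\AA = \ZZ\PP = \HH$ itself, for which $\DD_\text{out} = \ZZ$; substituting back gives $\ZZ\DD_\text{out}^{-1}\AA = \ZZ\,\ZZ^{-1}\HH = \HH$, as required. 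By construction this is exactly the update $[\ttheta(n+1)] = [\ttheta(n)]\ZZ\PP$ of \eqref{eq:general} --- an unbiased random walk on $\AA$ followed by the per-vertex replicating step $\ZZ$ --- so $\HH$ does model a general dynamical process on $G=\{V,E,\AA\}$. The choice of $\AA$ is not unique: any $\GGamma\PP$ with $\GGamma$ a positive diagonal matrix works, which is precisely the $N$ row-scaling degrees of freedom in Lemma~\ref{the:mapping}.

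The hard part --- really the only subtlety --- is the degenerate case in which $\HH$ has a zero row, say row $u$: then $z_u = 0$, so $\ZZ$ cannot be taken strictly positive, and vertex $u$ is a dangling (sink) vertex for which $\DD_\text{out}$ is singular. I would argue this is not a defect of the construction but a genuine boundary of the model: a zero row of $\HH$ means every signal routed through $u$ is annihilated, which is exactly the $z_u = 0$ limit of the replicating factor already admitted in the discussion around \eqref{eq:epidemic} (where $z \ge 0$ is allowed). One handles it by the usual convention of inserting a self-loop at each dangling vertex so that $\PP_{\AA}$ stays well defined and then setting the corresponding replicating factor to $0$; with this understanding $\HH = \ZZ\DD_\text{out}^{-1}\AA$ holds with $\ZZ$ a nonnegative diagonal matrix for every nonnegative $\HH$, and the argument above goes through verbatim on the subgraph of non-dangling vertices.
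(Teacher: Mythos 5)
Your construction lands in exactly the same place as the paper's one-line proof: take $\AA = \HH$ itself and $\ZZ = \DD_\text{out}$ (the diagonal matrix of row sums of $\HH$), so that $\ZZ\DD_\text{out}^{-1}\AA = \HH$; your detour through Lemma~\ref{the:mapping} and the observation that $\ZZ$ is forced to equal the row sums is just a more explicit derivation of that same choice. The only genuine addition is your treatment of zero rows, where $\DD_\text{out}$ is singular and the paper's proof (which tacitly needs positive row sums, since $\ZZ$ is required to be a positive diagonal matrix) is silent; that caveat is correct and worth noting, but does not change the essence of the argument.
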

\begin{proof}
 Let $\AA = \HH$. For any nonnegative matrix $\HH$, $\AA = \HH$ represents a graph adjacency matrix. By setting \replaced{$\ZZ = \DD_\text{out}$}{$\ZZ = \DD_{out}$}, we have \replaced{$\HH = \ZZ \DD_\text{out}^{-1}\AA$}{$\HH = \ZZ \DD_{out}^{-1}\AA$}.
\end{proof}
The simple proof essentially states that any nonnegative matrix can be interpreted as an epidemic model based on a random walk with non-uniform replicating factors that are proportional to the vertex degrees. From a degrees-of-freedom perspective, the non-uniform replicating factor $\ZZ$ perfectly matches the row normalization constraint.

Combining \replaced{Theorems \ref{the:unification} and \ref{the:general}}{Theorem \ref{the:unification} and Theorem \ref{the:general}}, we observe the full generality of the proposed nonnegative graph filters. Based on the algebraic framework introduced in \cite{sandryhaila_discrete_2013}, \deleted{using }these nonnegative shift operators also define\deleted{s} a shift-invariant vector space, which can be generally expressed as $\mathcal{F} = \left\{\KK: \KK=\sum_{l=0}^N h_l \HH^l \mid h_l\in \mathbb{R}\right\}$. This unification opens the door to discrete signal processing methodologies while connecting the concepts of random walk, epidemics\added{,} and information diffusion on networks. 

\section{\replaced{Continuous-time}{Continuous time} filters and Z-Laplacian}
\label{sec:continuous}
Based on the discrete nonnegative graph filters in \deleted{Equation }\eqref{eq:general}, we can define continuous dynamical processes by replacing the difference equations with differential equations, \replaced{yielding}{given by}\footnote{The variable $t$ is a real number representing continuous time.}\deleted[remark=rjd: modified equation]{}
\begin{align}
\label{eq:general_cont}
\frac{d \ttheta(t)}{dt} = -\ttheta(t)\LL =  -\ttheta(t) \TT^{-1}(\II - \ZZ \DD_\text{out}^{-1}\AA)\;,
\end{align}
where we define the operator \replaced[remark={bms: We include the $T^{-1}$ term out front in the definition in Table I and also in Lemma 6}]{$\TT^{-1}(\II - \ZZ \DD_\text{out}^{-1}\AA)$}{$\II - \ZZ \DD_{out}^{-1}\AA$} as the \emph{Z-Laplacian}. 

Here the positive diagonal matrix $\TT$ represents the time delay, or inverse clock rate, at each vertex. When $\ZZ=\II$, we have $\LL = \TT^{-1}(\II-\PP)$, which corresponds to a special case of the \replaced{parameterized}{parametrized} Laplacian we introduced in~\cite{Ghosh2014KDD}. Compared with \replaced{discrete-time}{discrete time} filters with uniform time steps, \deleted{Equation }\eqref{eq:general_cont} allows asynchronous updates by modeling the movement of random walks as Poisson processes for which the waiting times between jumps are exponentially distributed with \replaced{mean parameters}{means} specified by $\TT$ \cite{lambiotte_laplacian_2008}.

To connect the \replaced{continuous-time}{continuous time} Z-Laplacian to discrete filters, we first prove the following Lemma over a short time interval $\delta$\replaced{.}{,} 

\begin{lemma}[Discrete approximation]
\label{th:discrete-continuous}
Given the graph signal $\ttheta(t)$ and the \replaced{continuous-time}{continuous time} Z-Laplacian \replaced{$\TT^{-1}(\II - \ZZ \DD_\text{out}^{-1}\AA)$}{$\TT^{-1}(\II - \ZZ \DD_{out}^{-1}\AA)$}, the graph signal at time $t+\delta$ can be approximated as $\delta\rightarrow 0$ by\deleted{,}
$$\lim_{\delta\rightarrow 0}\ttheta(t+\delta) =\ttheta(t) e^{-\LL\delta }  = \ttheta(t)(\II-\lim_{\delta\rightarrow 0}\delta\LL)\;,
$$
where $\II-\delta\LL$ represents a discrete time filter.
\end{lemma}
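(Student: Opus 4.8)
The plan is to recognize \eqref{eq:general_cont} as a linear, constant-coefficient system of ordinary differential equations, write down its closed-form solution as a matrix exponential, and then take a first-order expansion in the step size $\delta$.

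First I would note that $\LL = \TT^{-1}(\II - \ZZ \DD_\text{out}^{-1}\AA)$ is a fixed $N\times N$ matrix, so \eqref{eq:general_cont} is autonomous and linear. Its unique solution taking the value $\ttheta(t)$ at time $t$ is $\ttheta(t+\delta) = \ttheta(t)\, e^{-\LL\delta}$, where $e^{-\LL\delta} := \sum_{k\ge 0}\frac{(-\delta)^k}{k!}\LL^k$ is the usual (everywhere-convergent) matrix exponential. One checks this directly: differentiating gives $\frac{d}{d\delta}\bigl[\ttheta(t)e^{-\LL\delta}\bigr] = -\ttheta(t)e^{-\LL\delta}\LL$, which matches \eqref{eq:general_cont}, and the expression equals $\ttheta(t)$ at $\delta=0$; uniqueness is standard linear ODE theory.

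Next I would expand the exponential, $e^{-\LL\delta} = \II - \delta\LL + \tfrac{\delta^2}{2}\LL^2 - \cdots = \II - \delta\LL + O(\delta^2)$, the remainder after the linear term being bounded by $\delta^2$ times a constant depending only on $\LL$. Hence $\ttheta(t+\delta) = \ttheta(t)(\II-\delta\LL) + O(\delta^2)$, so as $\delta\to 0$ the continuous-time evolution agrees to first order with the discrete one-step update $\ttheta(t+\delta)\approx\ttheta(t)(\II-\delta\LL)$, which is the claimed approximation. Finally, to justify calling $\II-\delta\LL$ a discrete-time filter I would write it as $(\II - \delta\TT^{-1}) + \delta\TT^{-1}\ZZ\DD_\text{out}^{-1}\AA$ and observe that, since $\TT$ has diagonal entries $\tau_u\ge 1$ while $\ZZ$, $\DD_\text{out}^{-1}$, and $\AA$ are nonnegative, every entry of $\II-\delta\LL$ is nonnegative once $\delta\le 1$; by Theorem \ref{the:general} it is then a general nonnegative filter, i.e., exactly an update of the form \eqref{eq:general}, which embeds the continuous-time Z-Laplacian into the discrete filter framework.

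The step requiring the most care is not a computation but the reading of the limit: both sides of the displayed identity tend trivially to $\ttheta(t)$, so the actual content is the derivative-level agreement of $e^{-\LL\delta}$ with $\II-\delta\LL$ (equivalently, that $\II-\delta\LL$ is the first-order Taylor truncation), together with the observation that for small $\delta$ this truncation is a genuine nonnegative graph filter rather than merely a formal matrix.
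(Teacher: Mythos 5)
Your proof is correct, and it reaches the lemma by a more direct route than the paper. You solve the linear ODE \eqref{eq:general_cont} exactly as $\ttheta(t+\delta)=\ttheta(t)e^{-\LL\delta}$ and then Taylor-expand the matrix exponential, $e^{-\LL\delta}=\II-\delta\LL+O(\delta^2)$, which immediately gives the first-order agreement with the one-step update $\ttheta(t)(\II-\delta\LL)$; you also correctly flag that the displayed limit is only meaningful at this derivative level, since both sides trivially tend to $\ttheta(t)$. The paper instead expands $e^{-\LL\delta}$ through the uniformization form $e^{-\lambda\delta}\sum_k \frac{1}{k!}(\lambda\boldsymbol{\varPhi}\delta)^k$ with $\boldsymbol{\varPhi}=\II-\frac{1}{\lambda}\LL$, works entrywise at a fixed vertex $u$, shows the difference quotient $[\ttheta(t+\delta)-\ttheta(t)]_u/\delta$ converges back to \eqref{eq:general_cont}, and only then rearranges to the stated identity; that detour buys the Poisson-weighted random-walk interpretation \eqref{eq:uniform} that the paper exploits immediately afterward and in Theorem \ref{th:interpretation}, whereas your argument isolates the analytic content with less machinery. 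Your closing step, decomposing $\II-\delta\LL=(\II-\delta\TT^{-1})+\delta\TT^{-1}\ZZ\DD_\text{out}^{-1}\AA$ and invoking $\tau_u\ge 1$ plus Theorem \ref{the:general} to certify that the truncation is a genuine nonnegative filter, is not in the paper's proof of this lemma at all (the paper simply asserts the discrete-filter reading); it is essentially the decomposition the paper defers to its proof of Theorem \ref{th:interpretation}, so your version makes explicit a point the paper leaves implicit, at the small cost of the restriction $\delta\le 1$ (more precisely $\delta\le\min_i\tau_i$), which is harmless since the statement only concerns $\delta\rightarrow 0$.
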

\begin{proof}
 See appendix.
\end{proof}
\replaced{Letting}{Let} $\delta = \frac{1}{\lambda}$, we rewrite the discrete time filter as\deleted[remark=rjd: modified equation]{}
\begin{align}
\boldsymbol{\varPhi} = \II - \frac{1}{\lambda}\LL = \II - \frac{1}{\lambda}\TT^{-1}(\II - \ZZ \DD_\text{out}^{-1}\AA)\;.
\end{align}
Substituting $\LL = \lambda(\II-\boldsymbol{\varPhi})$, we can write the solution of \eqref{eq:general_cont} as\deleted[remark=rjd: modified equation]{}
\begin{align}
\ttheta(t) = \ttheta(0) e^{-\LL t} = \ttheta(0) e^{-(\II-\boldsymbol{\varPhi})\lambda t}= \ttheta(0) e^{-\lambda t}e^{\lambda \boldsymbol{\varPhi} t}\;.    
\end{align}
\replaced{Using the}{The} ``uniformization technique'' \cite{reibman1988numerical} \replaced{we can }{allows us to} write matrix exponentiation in terms of \added{an }infinite sum of matrix powers\replaced{ to obtain}{,}\deleted[remark=rjd: modified equation]{}
\begin{align}
\label{eq:uniform}
\ttheta(t) = \ttheta(0) e^{-\lambda t} \sum_{k=0}^{\infty} \frac{1}{k!} (\lambda \boldsymbol{\varPhi} t)^k &= \ttheta(0)\sum_{k=0}^{\infty} \frac{ e^{-\lambda t}}{k!} (\lambda \boldsymbol{\varPhi} t)^k \nonumber\\
	   &= \ttheta(0) \sum_{k=0}^{\infty} P_\text{Poi}(k,t) \boldsymbol{\varPhi}^k \;,
\end{align}
where \replaced{$P_\text{Poi}(k,t)$}{$P_{poi}(k,t)$} represents a Poisson probability mass function of a \replaced{discrete-time}{discrete time} random walk \added{with intensity $\lambda$} taking $k$ steps during the time interval $[0,t)$\deleted{, with intensity $\lambda$}.

As \deleted{Equation }\eqref{eq:uniform} demonstrates, the continuous process can be interpreted as a weighted infinite sum of powers of $\boldsymbol{\varPhi}$, representing discrete time random walks of different lengths. The following theorem give us \deleted{an }intuition \replaced{for}{of} the delay factors $\TT$.

\begin{theorem}[Delay interpretation]
\label{th:interpretation}
Given the \replaced{continuous-time}{continuous time} Z-Laplacian \replaced{$\TT^{-1}(\II - \ZZ \DD_\text{out}^{-1}\AA)$}{$\TT^{-1}(\II - \ZZ \DD_{out}^{-1}\AA)$} and its discrete time approximation $\boldsymbol{\varPhi}$, the delay factor \replaced{$[\TT]_{uu}$ of vertex $u$}{of vertex $u$, $[\TT]_{uu}$} is proportional to the expected ``waiting steps'' on vertex $u$ for the approximated \replaced{discrete-time}{discrete time} random walk.
\end{theorem}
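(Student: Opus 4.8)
The plan is to read the waiting behaviour directly off the one-step filter $\boldsymbol{\varPhi}$ produced by Lemma~\ref{th:discrete-continuous} and then evaluate a geometric expectation. First I would expand
\[
\boldsymbol{\varPhi} = \II - \tfrac{1}{\lambda}\TT^{-1}(\II - \ZZ\DD_\text{out}^{-1}\AA) = \left(\II - \tfrac{1}{\lambda}\TT^{-1}\right) + \tfrac{1}{\lambda}\TT^{-1}\ZZ\PP \;,
\]
where $\PP = \DD_\text{out}^{-1}\AA$ is the conservative random walk transition matrix. The first summand is diagonal and, for $\lambda \ge 1$ (recall the normalization $\tau_u = [\TT]_{uu}\ge 1$), has entries $1 - \tfrac{1}{\lambda\tau_u}\in[0,1)$; the second summand routes the remaining mass along the edges of $\PP$, rescaled vertexwise by the replicating factors in $\ZZ$. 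Assuming $\AA$ carries no self-loops (equivalently, attributing any genuine self-loop mass to the graph rather than to the delay), the probability that the approximated walk currently at $u$ is told to ``wait'' at a $\boldsymbol{\varPhi}$-step is exactly $[\boldsymbol{\varPhi}]_{uu} = 1 - \tfrac{1}{\lambda\tau_u}$, and crucially this is independent of $z_u$: the replicating factor only rescales transported mass, it does not alter the stay/move split.

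Next I would compute the dwell time of a single visit to $u$. At each $\boldsymbol{\varPhi}$-step spent at $u$ the walk independently ``moves on'' with probability $p_u := \tfrac{1}{\lambda\tau_u}$ and ``waits'' with probability $1-p_u$, so the number of consecutive $\boldsymbol{\varPhi}$-steps the walk spends at $u$ per visit is geometric with success parameter $p_u$, hence has expectation $1/p_u = \lambda\tau_u$ (counting the departing step) or $\lambda\tau_u-1$ (counting only the pure waiting steps). In either convention the expected number of waiting steps at $u$ equals $\lambda\tau_u$ up to a vertex-independent affine correction, so $\tau_u = \tfrac{1}{\lambda}\,\mathbb{E}[\text{waiting steps at }u] = \delta\,\mathbb{E}[\text{waiting steps at }u]$ with $\delta = 1/\lambda$ the common step size of Lemma~\ref{th:discrete-continuous}; equivalently, for any two vertices $u,v$ the ratio of expected waiting steps tends to $\tau_u/\tau_v$ as $\delta\to 0$. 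Since $\lambda$ (equivalently $\delta$) is shared by all vertices, this is the claimed proportionality, and it matches the uniformization picture of \eqref{eq:uniform}: the rate of genuine moves out of $u$ is $\lambda p_u = 1/\tau_u$, so the continuous-time holding time at $u$ is exponential with mean $\tau_u$.

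I expect the only real obstacle to be pinning down what ``waiting steps'' should mean once $\ZZ\ne\II$ makes $\boldsymbol{\varPhi}$ non-stochastic: one must be explicit that the ``random walk'' being approximated is the underlying conservative chain $\PP$ equipped with the waiting self-loops coming from $\II - \tfrac{1}{\lambda}\TT^{-1}$, and that the diagonal correction $[\boldsymbol{\varPhi}]_{uu}$ — which governs waiting — is unaffected by $\ZZ$ precisely because of the no-self-loop observation above. Everything after that (the geometric-expectation computation and the $\delta\to 0$ limit reconciling the two counting conventions) is routine.
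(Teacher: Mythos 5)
Your proposal is correct and follows essentially the same route as the paper: the same splitting of $\boldsymbol{\varPhi}$ into a diagonal waiting term $\II-\delta\TT^{-1}$ plus a rescaled transition term $\delta\TT^{-1}\ZZ\DD_\text{out}^{-1}\AA$, followed by the same geometric-series expectation showing the expected waiting steps at $u$ equal $\tau_u/\delta$, i.e., proportional to $\tau_u$ with a vertex-independent constant. Your explicit remarks about the no-self-loop convention and about $\ZZ$ not affecting the stay/move split are just slightly more careful versions of what the paper leaves implicit (the paper instead normalizes $\tau_i\le 1$ and takes $\delta=\min_i\tau_i$, which plays the role of your $1/\lambda$).
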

\begin{proof}
 See appendix.
\end{proof}

Lemma \ref{th:discrete-continuous} and Theorem \ref{th:interpretation} provide interpretation of the delay factors in $\TT$\deleted{,} and connect the continuous\added{-} and discrete-time models. In the next section we give an example, applying the \replaced{continuous-time}{continuous time} model to a communications network.

Next, we prove the central theorem of the Z-Laplacian framework, which \replaced{extends}{extents} the generality Theorem \replaced{(Theorem \ref{the:general})}{\ref{the:general}} to the continuous domain.
\begin{theorem}[\replaced{Continuous-Time Generality Theorem}{Continuous time generality theorem}]
\label{the:general_cont}
For an arbitrary Z-matrix $\LL$, we can associate it with a Z-Laplacian \replaced{$\LL = \TT^{-1}(\II - \ZZ \DD_\text{out}^{-1}\AA)$}{$\LL = \TT^{-1}(\II - \ZZ \DD_{out}^{-1}\AA)$}, which models a general \replaced{continuous-time}{continuous time} dynamical process on a graph (with self-loops) $G = \{V,E,\AA\}$.
\end{theorem}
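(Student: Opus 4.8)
The plan is to imitate the proof of the discrete Generality Theorem (Theorem~\ref{the:general}): given the target operator $\LL$, build the graph directly from $\LL$ and push all remaining freedom into the diagonal parameters $\TT$ and $\ZZ$. Concretely, I would choose a positive diagonal matrix $\TT$ (to be specified below), declare the adjacency matrix to be $\AA := \II - \TT\LL$, let $\DD_\text{out}$ be the out-degree matrix of $\AA$, and set $\ZZ := \DD_\text{out}$. Then $\ZZ\DD_\text{out}^{-1}\AA = \AA$, so $\TT^{-1}(\II - \ZZ\DD_\text{out}^{-1}\AA) = \TT^{-1}(\II - \AA) = \TT^{-1}\bigl(\II - (\II - \TT\LL)\bigr) = \LL$, which is exactly the claimed identification. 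Substituting this $(\AA,\ZZ,\TT)$ into \eqref{eq:general_cont} gives $d\ttheta(t)/dt = -\ttheta(t)\LL$, so the constructed Z-Laplacian does drive the prescribed continuous-time process, and via the uniformization expansion \eqref{eq:uniform} it inherits the discrete random-walk-with-replication interpretation developed earlier.

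For this construction to be legitimate I need to verify three things: (i) $\AA = \II - \TT\LL$ is a nonnegative matrix, hence a bona fide adjacency matrix (with self-loops allowed); (ii) every row sum of $\AA$ is strictly positive, so $\DD_\text{out}$ is invertible and $\ZZ = \DD_\text{out}$ is a positive diagonal matrix; and (iii) $\TT$ is a positive diagonal matrix. The off-diagonal entries of $\AA$ are $[\AA]_{uv} = -\tau_u[\LL]_{uv}$, and these are automatically nonnegative \emph{precisely because $\LL$ is a Z-matrix} ($[\LL]_{uv}\le 0$ for $u\neq v$) and $\tau_u>0$ --- this is the only place the Z-matrix hypothesis is used. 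The diagonal entries are $[\AA]_{uu} = 1 - \tau_u[\LL]_{uu}$, which I can force to be \emph{strictly} positive for every $u$ by taking $\tau_u := 1$ whenever $[\LL]_{uu}\le 0$ and $\tau_u := 1/(2[\LL]_{uu})$ (or any value in $(0,1/[\LL]_{uu})$) whenever $[\LL]_{uu}>0$. This single choice secures (iii) and the strict positivity of the diagonal of $\AA$; the latter in turn gives (i) together with the sign argument above, and gives (ii) because each row sum is at least its (positive) diagonal entry plus nonnegative off-diagonal terms.

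The one genuine subtlety --- and the point I would flag as the main obstacle --- is that an arbitrary Z-matrix may have large positive diagonal entries, which forces $\tau_u<1$ at those vertices and so breaks the normalization convention $\tau_u\ge 1$ that was adopted earlier for Laplacians arising from genuine random walks. I would resolve this by observing that the $\tau_u\ge1$ convention was only a scale-fixing device available when the operator comes from an honest random walk; for a general Z-Laplacian (whose replicating factors $\ZZ = \DD_\text{out}$ are not rate-limited in that way) no such bound can be expected, and if a uniform clock scale is desired one may simply rescale time, replacing $\LL$ by $c\LL$ for small $c>0$. I would also stress, as the theorem statement already signals by writing ``graph (with self-loops)'', that self-loops are essential here: the diagonal of $\AA = \II - \TT\LL$ is generically nonzero, and it is exactly this diagonal that realizes the (otherwise unconstrained) diagonal of $\LL$. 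With these remarks the proof is complete.
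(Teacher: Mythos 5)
Your construction is essentially the paper's own proof: the paper sets $\TT = \delta\II$ with $0\le\delta\le 1/\max_i|[\LL]_{ii}|$, takes $\AA = \II-\delta\LL$ (nonnegative off the diagonal exactly by the Z-matrix hypothesis), and sets $\ZZ = \DD_\text{out}$, just as you do with a vertex-dependent $\TT$. Your extra care --- choosing $\tau_u$ so the diagonal of $\AA$ is strictly positive, hence $\DD_\text{out}$ invertible, and your remark on the $\tau_u\ge 1$ convention --- tightens details the paper passes over silently, but the route is the same.
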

\begin{proof}
Without loss of generality, we set $\TT = \delta\II$, and thus\deleted[remark=rjd: modified equation]{}
 \begin{align}
-\LL &= \frac{1}{\delta} (\ZZ\DD_\text{out}^{-1}\AA - \II)\nonumber\\
 \ZZ\DD_\text{out}^{-1}\AA &= \II-\delta\LL\;.
 \end{align}
Let $\LL$ be an arbitrary Z-matrix, i.e., a real square matrix \replaced{with nonpositive off-diagonal entries}{whose off-diagonal entries are nonpositive}\cite{fiedler2008special}. \replaced{Then}{Therefore,} $-\LL$ has only nonnegative off-diagonal entries. If there are negative diagonal entries, we set \replaced{$0\le \delta \le 1/\max_i |[\LL]_{ii}|$}{$0\le \delta \le \frac{1}{\max_i |[\LL]_{ii}|}$}, making $\II-\delta\LL$ a nonnegative matrix.

By setting \replaced{$\ZZ = \DD_\text{out}$}{$\ZZ = \DD_{out}$}, we have $\AA = \II-\delta\LL$, which represents a graph adjacency matrix given any Z-matrix $\LL$.
\end{proof}

Theorem \ref{the:general_cont} justifies the generalized \replaced{continuous-time}{continuous time} operator as the \replaced{Z-Laplacian}{\emph{Z-Laplacian}}, which is \replaced{a}{the} unifying framework for all potential shift operators. The Z-Laplacian is closely related to the \emph{generalized Laplacian} \cite{taylor_eigenvector2015, PavezGlap2016}, with the latter being the symmetric subset of the Z-Laplacian.

\section{Communications Network Analysis}
\label{sec:sample}
It is common practice to use a graph \replaced{to model}{model for} a communications network, with edges modeling one-hop connectivity between nodes. For example, convergence analysis of a consensus process as in \eqref{eq:CON} is linked with the properties of the adjacency matrix of the graph \cite{olfati-saber_consensus_2007}. This assumes an underlying communications protocol based on the connectivity, where each iteration requires every node to have a message exchange with its one-hop neighbors.  This is a useful abstraction, but does not model the impact of channel variation, rate, multi-user interference, or delay, as occur in wireless communications applications such as sensor and mobile ad-hoc networks. Employing the Z-Laplacian framework enables the modeling of delays and multi-user collisions, which are a function of the communications protocol and the graph topology.  To illustrate this, we consider analysis of a communications network to include delay effects\deleted{,} and their linkage with the 
topology and the medium access control (MAC) protocol.  We study the isolation of network bottlenecks, including the impact of MAC choices, as well as topology alteration by inserting nodes and changing bandwidth allocation to \replaced{improve network connectivity}{enhance the network and eliminate the bottleneck}.

\subsection{Networking Bottlenecks and the Z-Laplacian}
Given a network graph, consider the problem of identifying bottlenecks where traffic may funnel through a small \replaced{subset of nodes}{node subset} and create excess delays in the overall network performance. We build on prior work for bottleneck identification based solely on the network topology \cite{Sadler2,cheng2017network}, briefly reviewed next.  We use a running example throughout this section, illustrated in Figures \ref{fig:wireless}, \ref{fig:wirelessTDMA}, and \ref{fig:hybrid}. In the figures, colors indicate graph subset membership, colored disks around each vertex indicate self-loops whose weight is proportional to communications delay at that vertex \added{(}with larger area implying more delay\added{)}, and graphical edge thickness is proportional to link rate between the two vertices \added{(}with a thicker edge implying higher bandwidth\added{)}.

In the simplest setting \cite{Sadler2,cheng2017network}, we assume that all communications are orthogonal, with no multi-user interference or additional delay, and that each vertex always has a packet to transmit (a fully saturated network traffic model). Recalling our notation from Section II, the network graph is $G=\{V,E,\AA\}$, and elements of $\AA$ are denoted \replaced{$a_{ij}$}{$a_{i,j}$}. We begin with the graph network model shown in \figref{fig:wireless}.
\begin{figure}
  \begin{subfigure}[b]{0.48\textwidth}
    \centering
    \includegraphics[width=\textwidth]{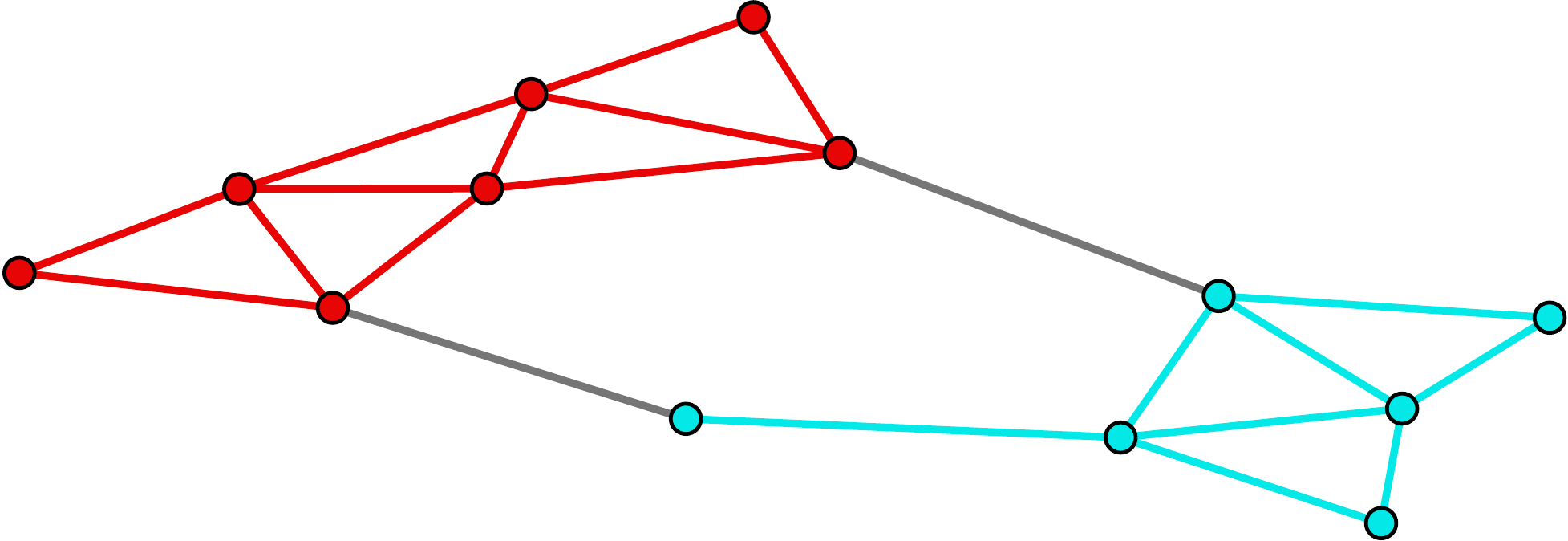}
    \caption{Traffic = $42$, Conductance = $0.22$}
  \end{subfigure}
  \begin{subfigure}[b]{0.48\textwidth}
    \centering
    \includegraphics[width=\textwidth]{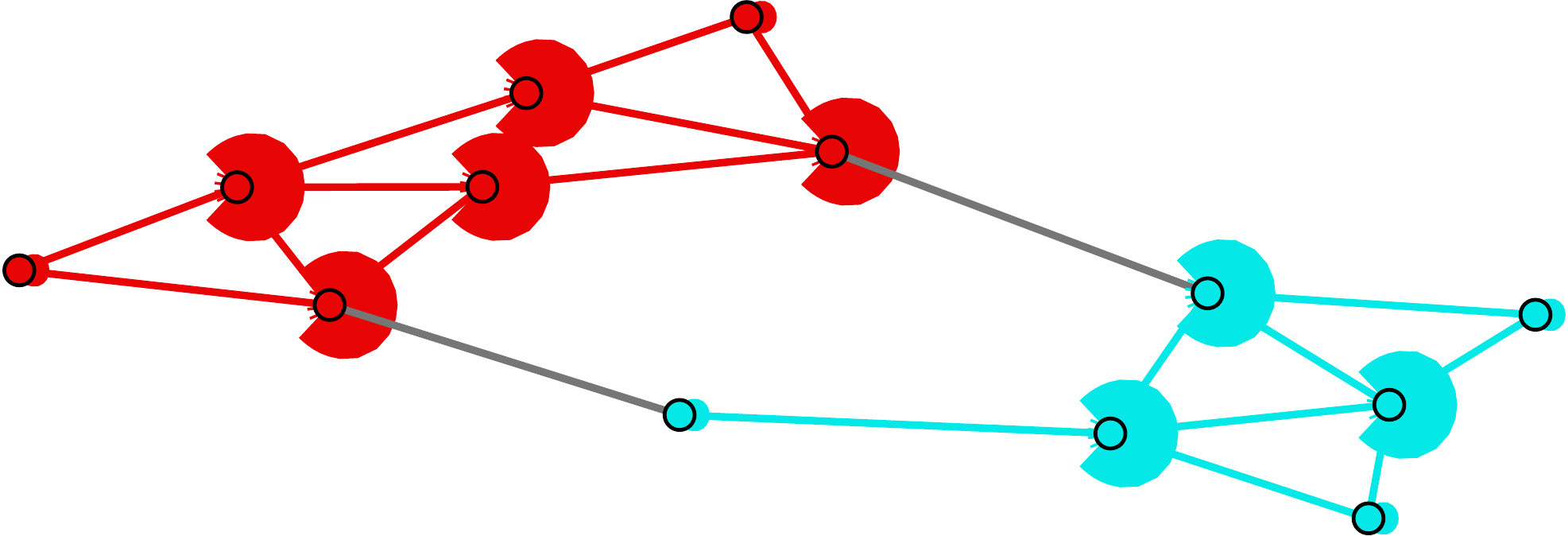}
    \caption{Traffic = $148$, Conductance = $0.067$}
  \end{subfigure}
\caption{Communications network example graph (see text). Colors indicate graph subset membership after bottleneck discovery. \replaced{(a)}{1(a):} An ideal network graph with uniform and orthogonal connectivity\deleted{,} and no delays.  \replaced{(b)}{1(b):} To model a random access MAC protocol, self-loops are introduced to model traffic delay, indicated graphically as colored circles around the vertices\replaced{, the size of the circles being}{ and whose size is} proportional to delay. In both \replaced{subfigures,}{cases} the primary communications bottleneck lies at the boundary of the two vertex communities, indicated by their different colors. The total traffic is measured by the total weighted degree, and the global efficiency is measured by the conductance.}
\label{fig:wireless}
\end{figure}
Assuming each undirected edge carries one unit of traffic in both directions, the overall network traffic equals the total weighted degree, or $2 \times |E| = 42$ for our example. Under the Z-Laplacian framework\added{,} as in \eqref{eq:general_cont}, this simple traffic pattern can be captured by setting both the vertex delays and replicating factors to the identity matrix, so $\ZZ = \TT = \II$\deleted{,} and
\begin{equation}
\label{eq:base}
 \LL = \II - \DD^{-1}\AA\;,
\end{equation}
where $\DD$ is the (diagonal) degree matrix (see Table \ref{tab:glossary}). This corresponds to the idealized case in \replaced{Fig.}{Figure} 1(a). By setting $\TT = \II$, we assume each vertex spends one unit of delay to process the packet, whereas additional delays will represent interference or collisions\replaced{, as described below}{ later on}. Here we assume that packets only flow along graph edges\deleted{,} and that each packet is only sent once (there is no duplication or broadcasting of packets), as indicated by $\ZZ = \II$.

To find and quantify the \added{primary }network bottleneck under different MAC protocols, we use the more flexible transformed graph $\WW$ for definitions (see Table \ref{tab:glossary}). For the \replaced{base-case}{base case} Laplacian in \eqref{eq:base}, we simply set $\WW = \AA$.

We split the vertices into two subsets, $S\subseteq V$ and its complement $\bar{S}= V\setminus S$. Let \replaced{$\cut(S,\bar{S})=\sum_{i \in S, j \in \bar{S}} (w_{ij}+w_{ji})$}{$\cut(S,\bar{S})=\sum_{i \in S, j \in \bar{S}} (w_{i,j}+w_{j,i})$} denote the total weights of all edges between $S$ and $\bar{S}$. Let \replaced{$\vol(S)=\sum_{i\in S} d_i = \sum_{i\in S,j\in V} w_{ij}$}{$\vol(S)=\sum_{i\in S} d_i = \sum_{i\in S,j\in V} w_{i,j}$} denote the total (undirected, weighted) degree over all vertices in S. The ratio of these two quantities measures the balanced separation strength of the transformed graph, given by 
\begin{equation}
   \phi(S)  =  \frac{\cut_{\WW}(S,\bar{S})}{{\min\left\{\vol(S),\vol(\bar{S})\right\}}}\;.
\end{equation}
The quantity $\phi(S)$ is called the \emph{conductance}, and we will use its minimum
\begin{equation}
\label{eq:conditionMin}
   \phi^*(G)  = \min_{S\in V} \phi(S) =  \phi(S^*)\;
\end{equation}
to measure the network bottleneck and its minimizer $S^*$ to locate the bottleneck at its boundary\footnote{We use the optimization algorithm introduced in \cite{yan2016capturing} to find both.}. The \replaced{base-case}{base case} Laplacian leads to a minimum conductance of $0.078$. 

\subsection{MAC Protocols and the Z-Laplacian}
The baseline analysis above assumes ideal and orthogonal communications, without any consideration of delay. Next\added{,} we adjust parameters of the Z-Laplacian to model different dynamical processes (protocols) on the same topology. Note that each protocol will induce a potentially different primary bottleneck position and strength according to \eqref{eq:conditionMin}, without changing the topology. 

Generally, depending on the protocol and topology, the vertex delays will be non-uniform. Consider a random access MAC, which will result in transmission collisions\deleted{,} and packet delays due to the need for \replaced{backoff}{back off} and retransmission. As one \replaced{modeling approach}{approach to model this}, let the vertex delays be proportional to the vertex degree, implying that \replaced{higher-degree}{higher degree} vertices will have more communications delay because collisions are more likely. Specifically, let $\TT=\DD$ so the delays are equal to the degree for each vertex\added{.}\footnote{More complicated \replaced{super-linear}{super linear} scaling can be implemented by other choices for the delay factors.}\deleted{.} The resulting Z-Laplacian is 
\begin{equation}
 \LL = \DD^{-1}(\II - \DD^{-1}\AA)\;.
\end{equation}
By appealing to the delay transformation in Lemma \ref{th:transformD}, delays are modeled by introducing vertex self-loops, and we have the corresponding transformed graph $\WW = \DD(\TT- \II)+\AA$. Applying this transformation to our example, \deleted{we have Figure 1(b), }again assuming a saturated traffic model\added{, we obtain Fig. 1(b)}. Now the more realistically modeled network is considerably less efficient than its idealized version, with total traffic $2 \times |E| = 148$\deleted{,} and a minimum conductance of $0.067$. Here, we have repeated the above bottleneck discovery algorithm, with the two resulting subsets shown in \replaced{Fig.}{Figure} 1(b). Although not evident from this example, we emphasize that changing the network protocol may result in a different bottleneck position and strength.

\replaced{Suppose now that}{Now suppose} we employ a \replaced{time-division}{time division} multiple access (TDMA) protocol to minimize collisions and enhance overall network throughput. \figref{fig:wirelessTDMA}(a) models this case as a graph filter, where each vertex evenly divides its transmission time among its neighbors. As a result, edges between \replaced{high-degree}{high degree} vertices have a reduced effective bandwidth due to their smaller share of the \replaced{time-division}{time division} allocation. This is readily captured under the Z-Laplacian framework using the bias transformation (Lemma \ref{th:transformB}): 
\begin{equation}
\label{eq:degreeBias}
 \LL = \TT^{-1}[\II - \DD_{\WW}^{-1} (\DD^{-1}\AA\DD^{-1})]\;, 
\end{equation}
where we have applied the undirected version of the bias transformation on both sides of $\AA$\deleted{,} and $\DD_{\WW'}^{-1}$ is the diagonal degree matrix of the transformed graph $\WW' = \DD^{-1}\AA\DD^{-1}$. For comparison, we started with the same traffic \added{as in \figref{fig:wireless}(a)}, with each edge carrying one unit of traffic in both directions\deleted{ as in \figref{fig:wireless} (a)}. We first saturated the time-divided effective bandwidth and then visualized the remaining traffic as self-loops or, effectively, vertex delay factors $\TT$. The resulting transformed graph is $\WW =  \DD_{\WW'}(\TT- \II)+\WW'$, as visualized in \figref{fig:wirelessTDMA}(a).
\begin{figure}
  \begin{subfigure}[b]{0.48\textwidth}
    \centering
    \includegraphics[width=\textwidth]{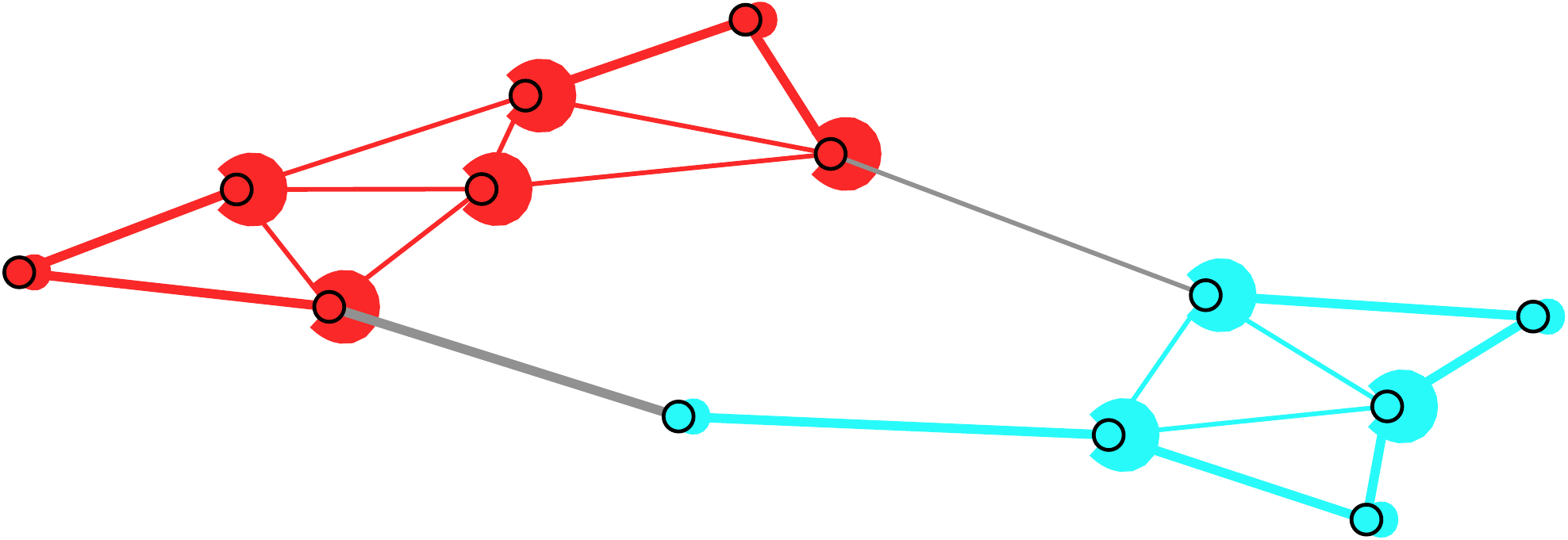}
    \caption{Traffic = $42$, Conductance = $0.078$}
  \end{subfigure}
  \begin{subfigure}[b]{0.48\textwidth}
    \centering
    \includegraphics[width=\textwidth]{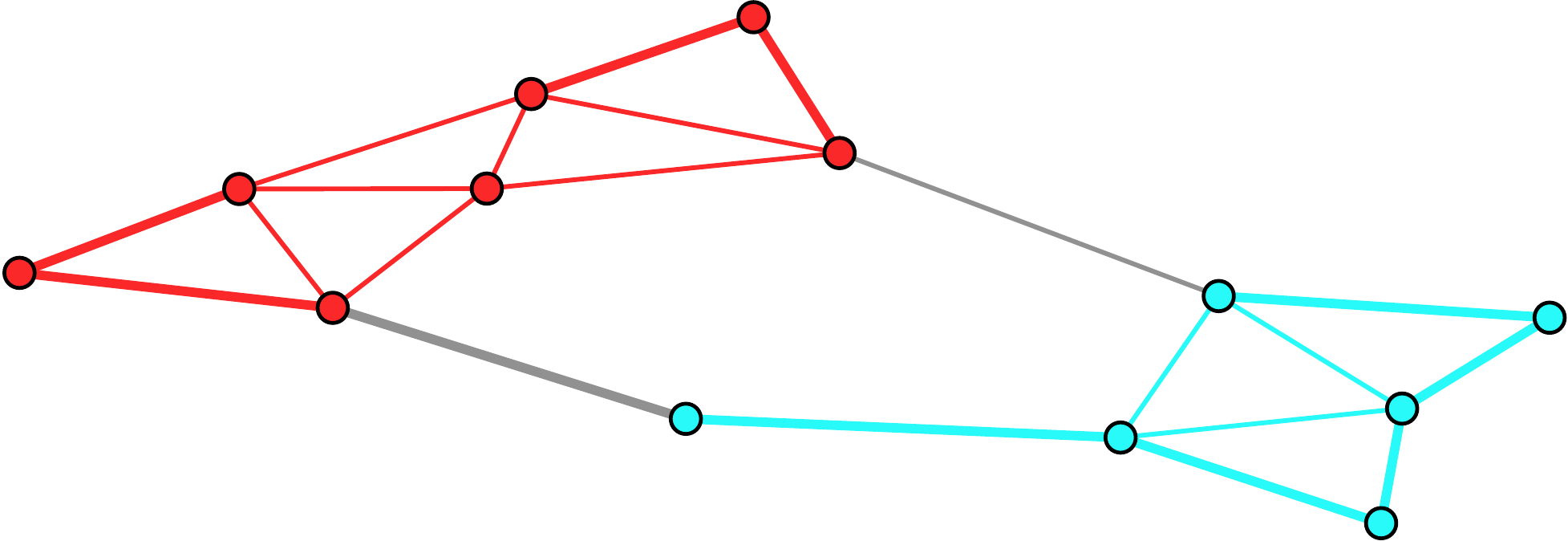}
    \caption{Traffic = $31$, Conductance = $0.21$}
  \end{subfigure}
\caption{Communications network example, continued (see text). \replaced{(a)}{2(a):} A TDMA protocol is modeled with self-loops modeling the delay time at each vertex due to time slot allocation, assuming a saturated traffic load.  \replaced{(b)}{2(b):} The same TDMA protocol is modeled, now with the traffic load perfectly matched to the available time slots, resulting in no delays.}
\label{fig:wirelessTDMA}
\end{figure}
It is a more efficient system compared \replaced{to that of }{with }\figref{fig:wireless}(b), with increased minimum conductance (now equal to $0.078$) and no redundant traffic (equal to $42$).

When the network traffic load becomes lighter, TDMA protocols are less efficient in terms of utilizing the full bandwidth. \figref{fig:wirelessTDMA}(b) models a perfectly loaded TDMA system, where the traffic is precisely matched to the capacity of each edge and each message hits its time slot without any delay. Compared with the perfectly loaded random access system in \figref{fig:wireless}(a), the network carries much less total traffic. These examples illustrate how the traffic load and protocol-induced delays can be modeled together to study the overall network performance, including the discovery of bottlenecks.

\subsection{Healing Networks by Vertex Insertion and Bandwidth Allocation}
Once a bottleneck has been identified, we can consider network healing by augmenting the network. There are various ways to do this, such as bandwidth reallocation to reduce delays. Here\added{,} we consider augmenting the network topology by introducing a (perhaps higher bandwidth) link across the bottleneck. We limit our study to the introduction of a new edge in the existing graph, although the general framework allows for more elaborate combinations of new vertex introduction and optimized resource allocation, which can correspond to an adaptive enhancement at the network physical layer and/or the MAC. There are many possibilities\added{,} and this is an interesting topic for future study. 

We continue our \added{running }example in this section, beginning with the random access protocol model in \figref{fig:wireless}(b). \replaced{Figs. \ref{fig:hybrid}(a) and \ref{fig:hybrid}(b)}{\figref{fig:hybrid}(a) and (b)} depict two cases where a new edge is introduced.
\begin{figure}
  \begin{subfigure}[b]{0.48\textwidth}
    \centering
    \includegraphics[width=\textwidth]{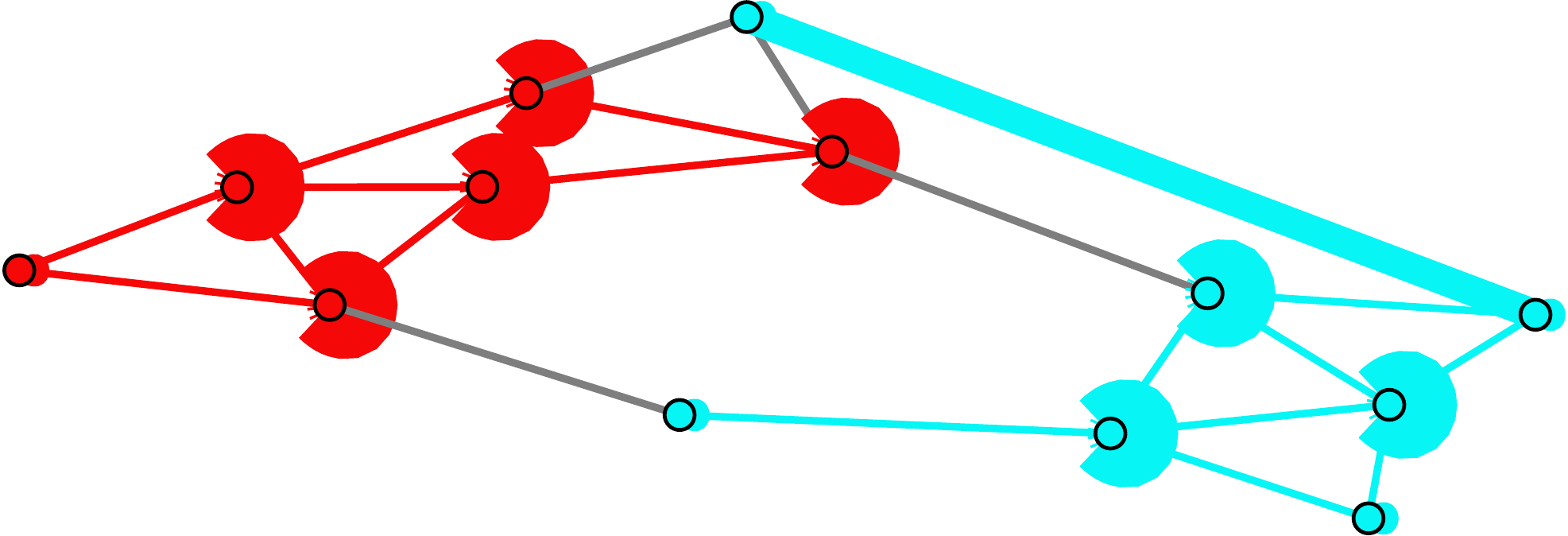}
    \caption{Traffic = $156$, Conductance = $0.11$}
  \end{subfigure}
  \begin{subfigure}[b]{0.48\textwidth}
    \centering
    \includegraphics[width=\textwidth]{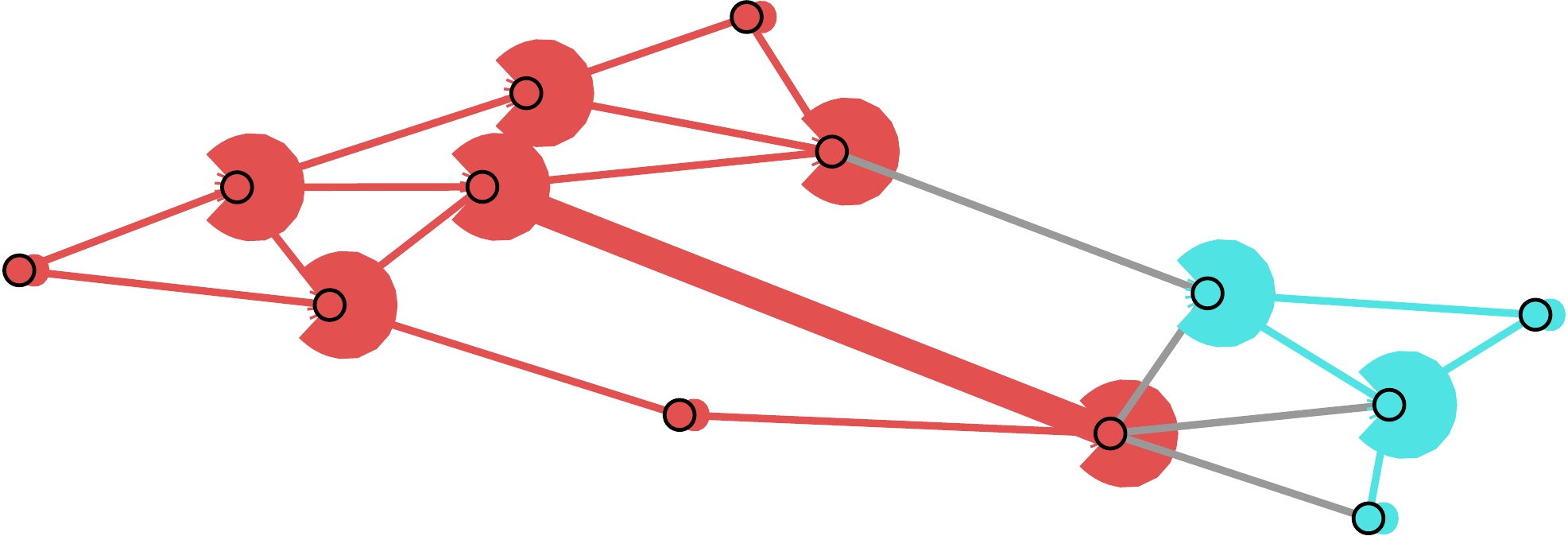}
    \caption{Traffic = $156$, Conductance = $0.2$}
  \end{subfigure}
  \begin{center}
  \begin{subfigure}[b]{0.48\textwidth}
    \centering
    \includegraphics[width=\textwidth]{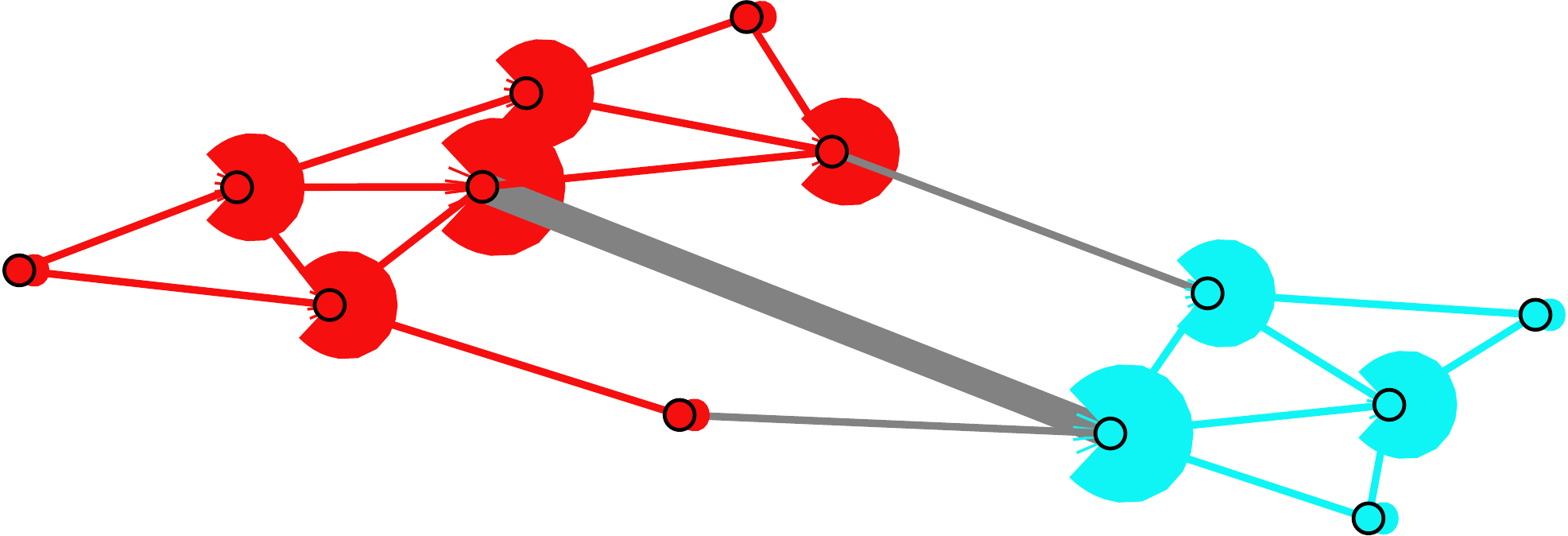}
    \caption{Traffic = $164$, Conductance = $0.12$}
  \end{subfigure}
  \end{center}
\caption{Communications network example, continued (see text). Two examples of bottleneck alleviation, using the results from \replaced{Fig.}{Figure} 1(b), by introducing a new edge, shown as the thick blue line in \deleted{3}(a)\deleted{,} and the thick red line in \deleted{3}(b).  While the total traffic measure is the same for both\added{ cases}, \deleted{case }(b) has higher conductance and leads to a more globally efficient network. In \deleted{3}(c), we \replaced{include the}{assume} additional delays \deleted{are }caused by the introduction of the new edge.}
\label{fig:hybrid}
\end{figure}
The new edges have a bandwidth that is four times greater than the preexisting edges (\added{and are }hence graphically thicker). We further assume the new edges do not introduce new interference to the preexisting edges, and so no new delays are incurred. The first case, shown \added{in} \figref{fig:hybrid}(a), connects two peripheral vertices at the corners of the two subsets, whereas the second case\added{,} shown in  \figref{fig:hybrid}(b)\added{,} connects two central vertices on opposite sides of the bottleneck.

In line with our intuition, the cross-link between central vertices in  \figref{fig:hybrid}(b) leads to an overall more efficient system as measured by conductance. The central vertices are more available to the entire network on both sides of the bottleneck. For both cases we also repeated the bottleneck discovery after the new edge was introduced, as shown in \figref{fig:hybrid} with the new color groupings. Comparing with  \figref{fig:wireless}(b), the new edge in \figref{fig:hybrid}(a) has less of an impact on the bottleneck subsets than the more effective new edge in  \figref{fig:hybrid}(b). 

To demonstrate that changing the network protocol on the same topology may result in different bottlenecks, we also repeated the bottleneck discovery with the assumption that the new \replaced{edge does}{edges do} introduce delays proportional to the increased vertex degrees. As shown in \figref{fig:hybrid}(c), the additional delays naturally lead to a less efficient system. Compared with \figref{fig:wireless}(b), the bottleneck location remains the same with the added edge\deleted{,} but \replaced{now with lower}{with improved} minimum conductance. 

This analysis can be expanded to discover bottlenecks and the corresponding optimal choices for new edges in an iterative manner, providing network enhancement options and adding robustness.

\section{Exploring graph frequency analysis for brain connectivity}
\label{sec:brain}
To illustrate how the Z-Laplacian based graph filters can be used in frequency analysis, we consider structural brain networks built from diffusion weighted imaging MRI scans\footnote{Diffusion weighted imaging captures bundles of white matter fibers, revealing the anatomical connections between different parts of the brain.} of 40 experiment participants \cite{betzel2013multi}. Frequency-specific brain activity is well known and associated with different brain states. Graph frequency analysis is thus a useful signal processing tool for studying functional brain networks \cite{Huang2016brain,FrequencyBrain}. Recently, there is also evidence that structural brain networks may also organize by graph spectrum \cite{daianu_spectral_2015,Medaglia2016functional,2016brainUSC}. However, previous work is mostly limited to the standard shift operators based on graph adjacency $\AA$ or (unnormalized) Laplacian matrices $\LL = \DD-\AA$. Here we demonstrate the flexibility of the Z-Laplacian in graph frequency analysis 
of given structural networks. In the future, we plan to \replaced{investigate}{do} direct signal analysis of functional networks. 

To demonstrate frequency analysis based on different candidate shift operators, we construct multiple Z-Laplacian operators using the same average network over all 40 samples. \figref{fig:structural} \replaced{depicts}{shows} the undirected weighted adjacency structure of the average structural network and a corresponding $2\%$ visualization.\footnote{An $x\%$ visualization displays the \replaced{$x\%$ of edges with the greatest weight}{top $x$ percentile edges in terms of the edge weights}. The percentile measure includes \replaced{zero-weight}{zero weight} edges or non-edges. In \figref{fig:structural}, the $2\%$ visualization contain the top 1,096 edges in terms of weight, whereas the $1\%$ visualizations in \figref{fig:filters} \replaced{each}{all} contain 548 edges.}
\begin{figure}
  \begin{subfigure}[b]{0.52\textwidth}
   \includegraphics[width=\textwidth]{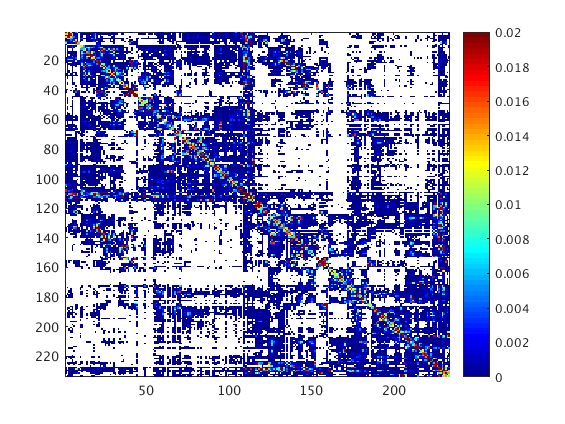}
  \end{subfigure}
  \begin{subfigure}[b]{0.42\textwidth}
   \includegraphics[width=\textwidth]{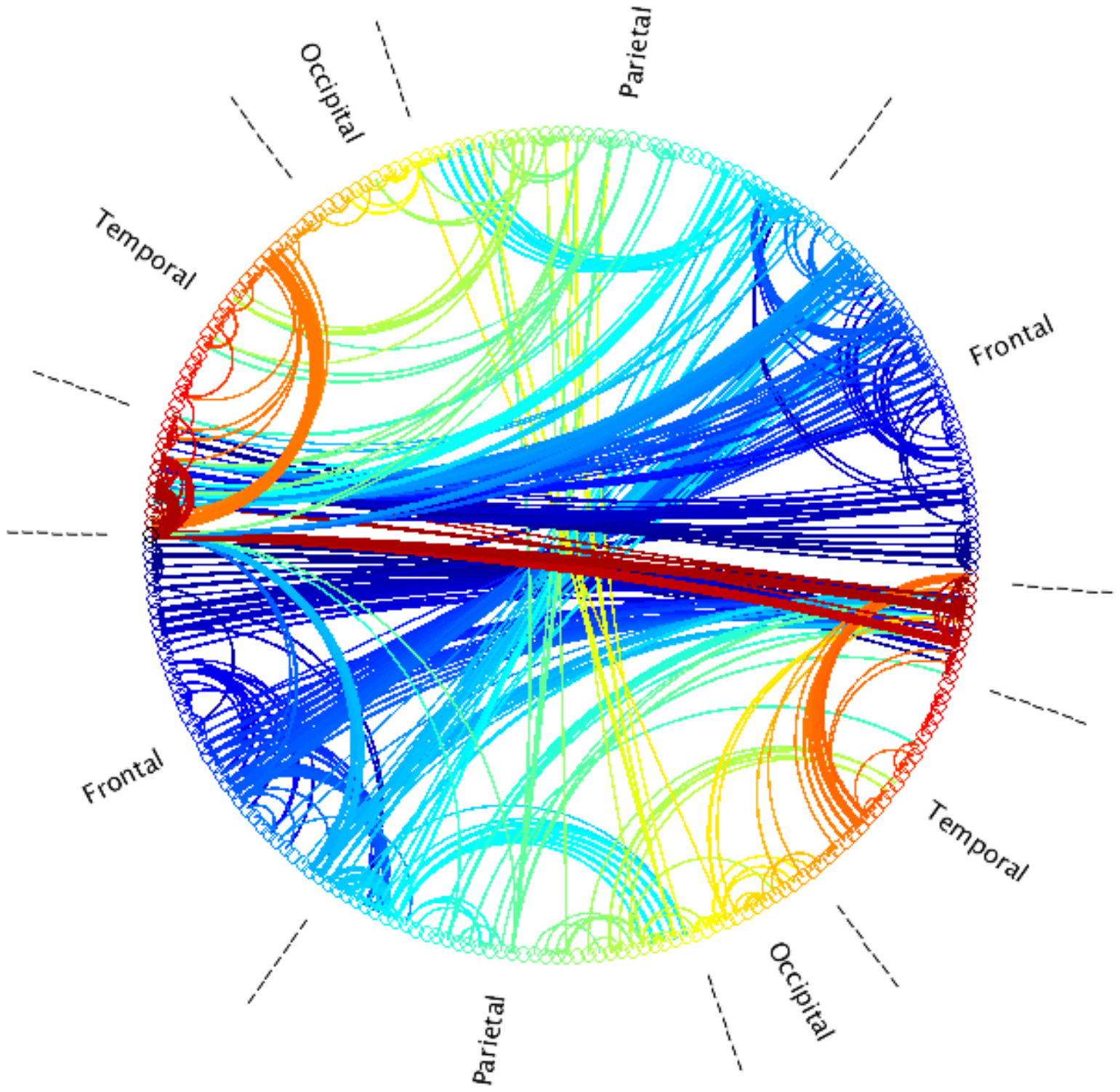}
  \end{subfigure}
\caption{The adjacency matrix and $2\%$ visualization of the average structural brain network}
\label{fig:structural}
\end{figure}
The visualization uses a circular layout similar to a connectogram \cite{irimia2012circular}. We have also labeled the four major regions of the cerebral cortex in the human brain, namely frontal lobe, parietal lobe, occipital lobe\added{,} and temporal lobe. For a better cross-hemisphere visualization, we adopted the circular symmetry and concatenated left and right hemispheres head to tail. We color coded the brain regions from blue to red in each hemisphere, and the stem is colored black. There are other brain regions in the network that are not labeled (dark red). Some vertices from the limbic lobe are merged into the frontal (blue) and parietal (green) lobes. Notice here the averaged network is quite dense with 27,540 total non-zero edges from all 40 samples.

Consider the following three Z-Laplacians as candidate shift operators:
\begin{align}
 \LL_0 &= {\DD}^{-1/2}(\DD - \AA){\DD}^{-1/2}\\
 \LL_1 &= \DD_{\WW}^{-1/2} [\DD_{\WW} - \DD^{-1}\AA\DD^{-1}] \DD_{\WW}^{-1/2}\\
 \LL_2 &= \ZZ^{1/2} \DD^{-1/2} (\ZZ^{-1}\DD - \AA) \DD^{-1/2}\ZZ^{1/2}\;.
\end{align}
These symmetric versions of \added{the }Z-Laplacian (Generalized Laplacians) are obtained through the similarity transformation (Lemma \ref{th:transformS}). \replaced{The first shift operator, $\LL_0$,}{$\LL_0$} is the symmetrized unbiased random walk Laplacian on the original adjacency. It is also the symmetric normalized Laplacian, which has been suggested as a shift operator \cite{shuman_signal_2012}. \replaced{The second shift operator, $\LL_1$,}{$\LL_1$} represents the biased random walk in \eqref{eq:degreeBias} with an increased tendency towards \replaced{lower-degree}{lower degree} vertices. The last shift operator\added{,} $\LL_2$\added{,} is the Z-Laplacian representing a self-replicating process based on the unbiased random walk $\LL_0$ but with $[\ZZ]_{ii}$ set to $10$ for vertices in the frontal lobe (blue) and $1$ for other vertices. Thus, $\LL_2$ emphasizes the hypothetical information flow associated with the frontal lobe. All three Z-Laplacian operators share the same uniform vertex time 
delays with $\TT=\II$.

Following the \replaced{approach}{idea} of graph frequency analysis \cite{shuman_signal_2012,sandryhaila_discrete_2014}, we consider the eigendecomposition\deleted[remark=rjd:made into equation with bold lambda]{}
\begin{align}
\LL = \VV \boldsymbol{\varLambda} \VV^T
\end{align}
of the Laplacians, where the columns of $\VV$ are eigenvectors of $\LL$ and \replaced{$\boldsymbol{\varLambda}$}{$\varLambda$} is a diagonal matrix comprising the corresponding eigenvalues. By replacing all but the four smallest eigenvalues with zero, we build simple low-pass filters based on the three different Z-Laplacian shift operators. The definition of \replaced{high- and low-frequency}{high/low frequency} bands is application specific. In this study, we choose the smallest four \replaced{eigenpairs}{eigen-pairs} to be the low frequencies because we are interested in analyzing the structure of four major brain regions. We can demonstrate the effect of these filters by reconstructing the adjacency structures based on the original $\DD$ and $\ZZ$ matrices:\deleted[remark=rjd: made into equation with bold lambda]{}
\begin{align}
\AA' = \ZZ^{-1/2} \DD^{1/2} (\II - \VV\boldsymbol{\varLambda}' \VV^T) \ZZ^{-1/2} \DD^{1/2}\;,
\end{align}
where \replaced{$\boldsymbol{\varLambda}'$}{$\varLambda'$} contains the first \replaced{four}{4} smallest eigenvalues on its diagonal, with all remaining elements set to zero.

The results are shown in \figref{fig:filters}, using $1\%$ visualizations. 
\begin{figure}
  \begin{subfigure}[b]{0.24\textwidth}
   \includegraphics[width=\textwidth]{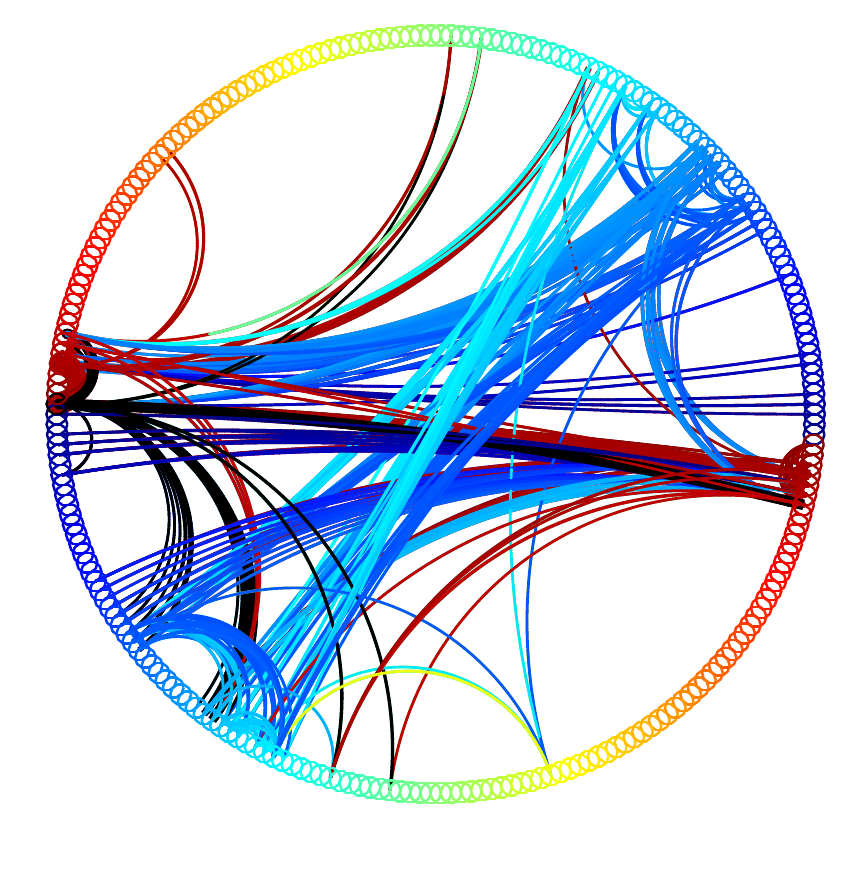}
   \caption{\replaced{Low-pass}{Low pass} filtered $\LL_0$}
  \end{subfigure}
  \begin{subfigure}[b]{0.24\textwidth}
   \includegraphics[width=\textwidth]{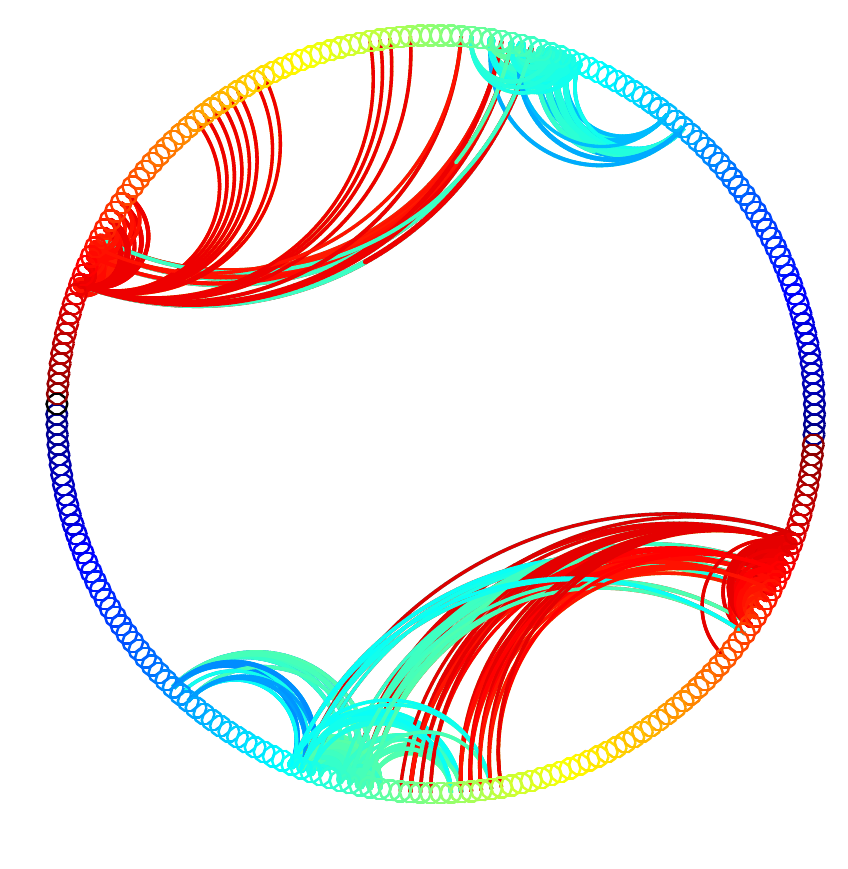}
   \caption{\replaced{Low-pass}{Low pass} filtered $\LL_1$}
  \end{subfigure}
  \begin{subfigure}[b]{0.24\textwidth}
   \includegraphics[width=\textwidth]{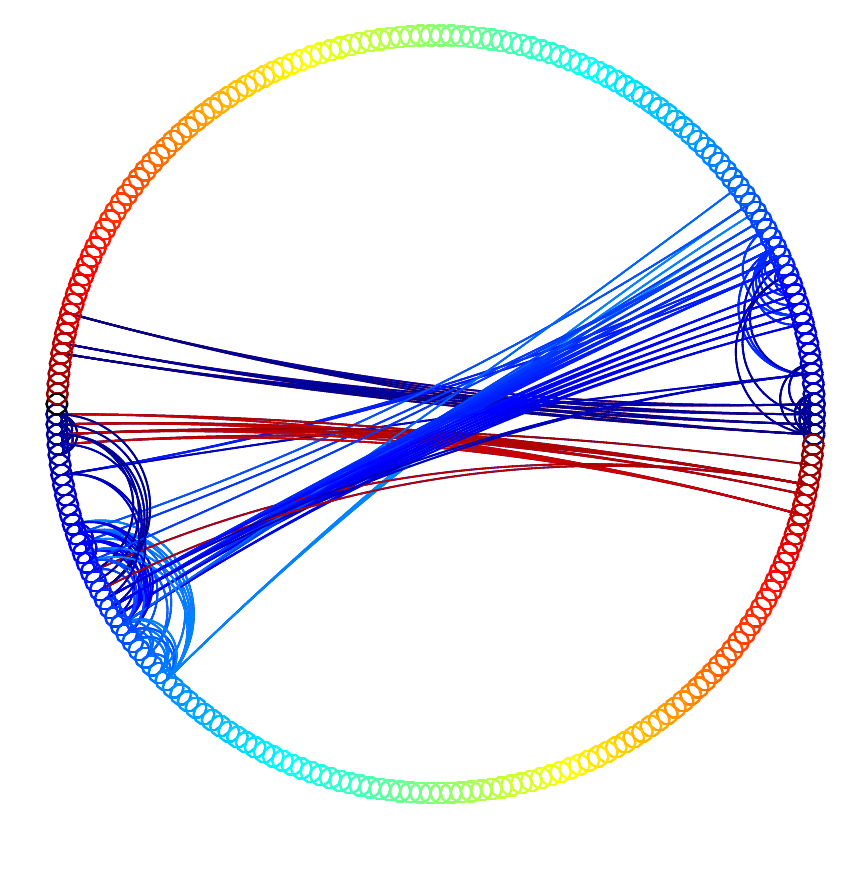}
   \caption{\replaced{Low-pass}{Low pass} filtered $\LL_2$}
  \end{subfigure}
  \begin{subfigure}[b]{0.24\textwidth}
   \includegraphics[width=\textwidth]{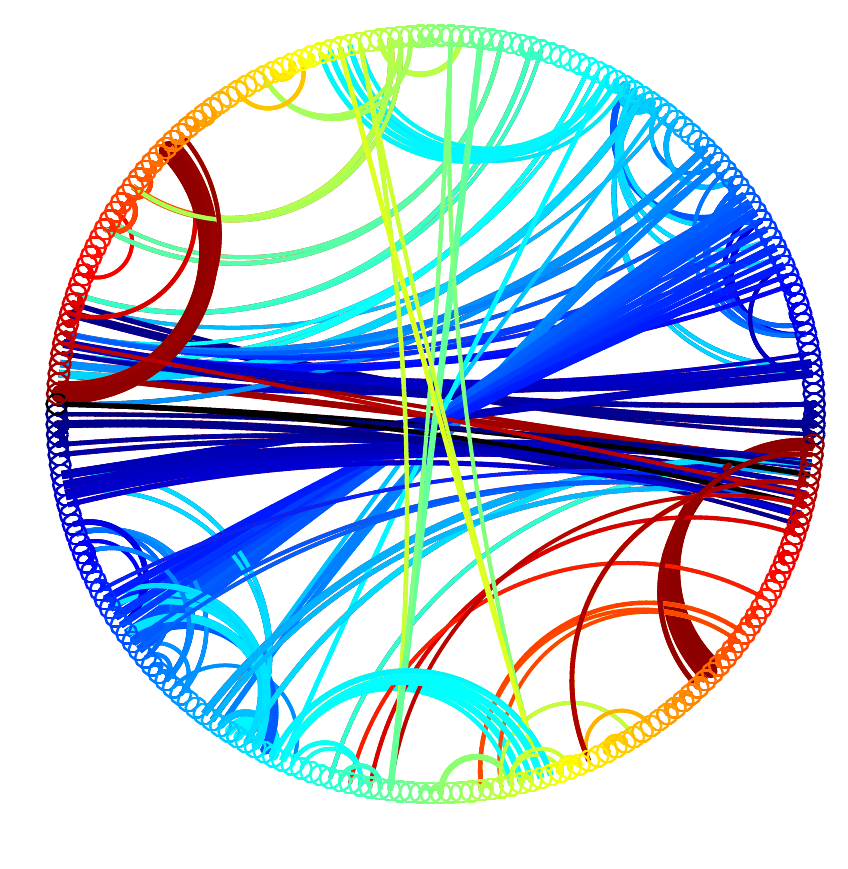}
   \caption{\replaced{High-pass}{High pass} filtered $\LL_0$}
  \end{subfigure}
\caption{\replaced{Depictions ($1\%$ visualizations) of the reconstructed adjacency structures after filtering.}{The reconstructed adjacencies after high-pass filtering with $1\%$ visualization}}
\label{fig:filters}
\end{figure}
Comparing \figref{fig:filters}(a) to \figref{fig:structural}, the \replaced{high-frequency}{high frequency} signals from $\LL_0$ are filtered out, leaving mostly stronger cross-hemisphere connections. The filtered $\LL_1$ network highlights distinctly different structures, with \replaced{low-degree}{low degree} vertices from \figref{fig:structural} now dominating the \replaced{low-frequency}{low frequency} spectrum. Strong cross-hemisphere connections associated with hub vertices are filtered out, revealing within-hemisphere connection patterns, especially those between parietal, occipital\added{,} and temporal lobes. In \figref{fig:filters}(c), the $\ZZ$ matrix is carefully designed to emphasize vertices in the frontal lobe. As expected, the result properly highlights the frontal lobe region and their internal 
connections, both within and across the hemispheres. 

Finally, \figref{fig:filters}(d) depicts the effect of applying a \replaced{high-pass}{high pass} filter obtained by replacing the four smallest eigenvalues of $\LL_0$ with zero and \replaced{leaving the rest unchanged}{keeping the rest}. Comparing Figs. 5(a)--(d), we see that the different shift operators each produce unique patterns.  Manipulating the Z-Laplacian allows exploration of different shift operators as we explore the brain connectivity. More importantly, the dynamical processes associated with each Z-Laplacian provides meaningful intuition and interpretations of the resulting shift operators and the induced family of linear invariant filters.

\section{Conclusions and future work}
In this paper, we proposed the \emph{Z-Laplacian} framework, which is capable of modeling different \replaced{discrete- and continuous-time}{discrete and continuous time} dynamical processes on graphs, including diffusion and epidemic processes. We proved that the Z-Laplacian spans the space of Z-matrices, leading to a general framework that unifies existing linear operators in the literature. When used as graph shift operators in applications, Z-Laplacian operators and their induced signal processing analysis have intuitive connections to the dynamical processes they model. This is especially useful \replaced{for relating and comparing different aspects of the same topological structures.}{when analyzing different aspects of the same topological structure, where they can be related and compared using interpretable parameters.}

The Z-Laplacian framework also naturally connects to concepts in network science, enabling graph theoretical methods to be used in signal processing problems. In particular, we showed a novel analysis that coupled network bottleneck discovery with the underlying wireless network protocol, including the impact of delay and collisions.  We demonstrated how conductance can be used to find primary bottlenecks, whose location and effect may change with choice of MAC protocol, and we considered topology modifications to alleviate the bottleneck and heal the network. This leads to more general questions of the effects of protocols on network dynamical processes, such as consensus, as well as the study of resource allocation within the network, which are important issues for further study. We also showed how a variety of graph shift operators can be applied to the problem of structural brain connectivity frequency analysis. In the future, we plan to apply the \replaced{GSP}{graph signal processing} tools to 
functional brain signals. We will also investigate the mathematical properties of Z-Laplacians, unifying vertex centrality and community structure under the framework as we did previously for the \replaced{\emph{parameterized Laplacians}}{\emph{parametrized Laplacians}} \cite{yan2016capturing}.



\begin{IEEEbiography}[{\includegraphics[width=1in,height=1.25in,clip,keepaspectratio]{figures/Yan}}]{Xiaoran Yan}
(M'17) received the B.S. degree from the Zhejiang University, China, in 2007 and the Ph.D. degree from the University of New Mexico, Albuquerque, NM, USA, in 2013, all in computer science. From 2013 to 2015, he was a Postdoctoral Research Associate at the University of Southern California Information Sciences Institute. He joined the Indiana University Network Science Institute in 2015 as an Assistant Research Scientist. His research interests include network science, statistical machine learning and their trans-disciplinary applications in social networks, communications networks and neuroscience. He has authored multiple journal and conference papers, and has participated as a reviewer or program committee member for major organizations such as IEEE and AAAI.
\end{IEEEbiography}

\begin{IEEEbiography}[{\includegraphics[width=1in,height=1.25in,clip,keepaspectratio]{figures/Sadler}}]{Brian M. Sadler}
(S'81--M'81--SM'02--F'07) received the B.S. and M.S. degrees from the University of Maryland, College Park, MD, USA, and the Ph.D. degree from the University of Virginia, Charlottesville, VA, USA, all in electrical engineering.  He is the US Army Senior Scientist for Intelligent Systems and a Fellow of the Army Research Laboratory (ARL) in Adelphi, MD, USA.  His research interests include information science and networked and autonomous intelligent systems.  He received Best Paper Awards from the Signal Processing Society in 2006 and 2010 and was general co-chair of the 2016 IEEE Global Conference on Signal and Information Processing. He was an associate editor for IEEE Transactions on Signal Processing and IEEE Signal Processing Letters.  He has been a guest editor for several journals, including IEEE JSTSP, IEEE JSAC, the IEEE SP Magazine, and the International Journal of Robotics Research.
\end{IEEEbiography}

\begin{IEEEbiography}[{\includegraphics[width=1in,height=1.25in,clip,keepaspectratio]{figures/Drost}}]{Robert J. Drost}
(M'10--SM'14) received the B.S. degree in electrical engineering from the University of Arkansas, Fayetteville, AR, USA, in 2000 and the M.S. degree in electrical engineering, the M.S. degree in mathematics, and the Ph.D. degree in electrical and computer engineering from the University of Illinois at Urbana-Champaign, Champaign, IL, USA, in 2002, 2005, and 2007, respectively. 

From 2007 to 2009, he was a Digital Signal Processing Design Engineer with Finisar Corporation. He joined the Army Research Laboratory (ARL) in 2010 as a Postdoctoral Fellow and, subsequently, an Electronics Engineer. He has authored several journal and conference papers, receiving the 2013 ARL Honorary Award for Publication. He has also authored four patents or patent applications and has participated as a reviewer and panelist for major organizations such as NSF and IEEE. His research interests include optical communications, signal processing, and graphical models. 
\end{IEEEbiography}

\begin{IEEEbiography}[{\includegraphics[width=1in,height=1.25in,clip,keepaspectratio]{figures/Yu}}]{Paul Yu}
(Member) received the Ph.D. degree in Electrical Engineering from the University of Maryland, College Park. Since 2006, he has been with the U.S. Army Research Laboratory (ARL) where his work is in the area of signal processing for wireless networking and autonomy. His most recent work focuses on the exploitation of mobility for improved wireless network connectivity in complex propagation environments. He received the Outstanding Invention of the Year award in 2008 and the Jimmy Lin Award for Innovation and Invention in 2009, both from the University of Maryland, and a Best Paper award at the 2008 Army Science Conference.
\end{IEEEbiography}

\begin{IEEEbiography}[{\includegraphics[width=1in,height=1.25in,clip,keepaspectratio]{figures/Lerman}}]{Kristina Lerman}
received the A.B. degree from Princeton University, Princeton, NJ, and the Ph.D. degree from the University of California at Santa Barbara, Santa Barbara, CA, all in Physics. She works as a Research Team Lead at the University of Southern California Information Sciences Institute and holds a joint appointment as a Research Associate Professor in the USC Computer Science Department. Trained as a physicist, she now applies network analysis and machine learning to problems in computational social science, including crowdsourcing, social network and social media analysis.  Her recent work on modeling and understanding cognitive biases in social networks has been covered by the Washington Post, Wall Street Journal, and MIT Tech Review.            
\end{IEEEbiography}

\section*{Appendix}
\subsection{Proof of Lemma \ref{th:transformB}}
\begin{lemma*}[Bias transformation]
Any biased random walk on $G = (V,E,\AA)$, with the diagonal matrix $\BB$ specifying vertex bias factors $b_v$, is equivalent to an unbiased random walk on the transformed graph $\WW = \AA \BB$. If $G$ is undirected, \replaced{then we instead consider }the transformed graph \deleted{is }$\WW = \BB \AA \BB$\added{ to maintain edges having equal weight in both directions}.
\end{lemma*}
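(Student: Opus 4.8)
The plan is to show that the one-step transition matrix of the biased random walk is \emph{literally} the transition matrix of an unbiased random walk on $\WW=\AA\BB$, and then, for undirected $G$, to replace $\AA\BB$ by the symmetric matrix $\BB\AA\BB$ and check that this replacement leaves the walk unchanged.

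First I would write the biased random walk explicitly. From a source $u$, each neighbor $v$ is reweighted by its target bias $b_v$ and the result is renormalized, so
\[
  [\PP^\text{BRW}]_{uv}\;=\;\frac{a_{uv}\,b_v}{\sum_{w}a_{uw}\,b_w}\;.
\]
Because $\BB$ is diagonal with $[\BB]_{vv}=b_v$, right multiplication by $\BB$ rescales the $v$-th column of $\AA$, i.e.\ $[\AA\BB]_{uv}=a_{uv}b_v$. Hence the numerator above is the weight $[\WW]_{uv}$ of the edge $(u,v)$ in $\WW=\AA\BB$, and the denominator is the out-degree of $u$ in $\WW$. Writing $\DD_{\WW}$ for the (out-)degree matrix of $\WW$, this is exactly the statement $\PP^\text{BRW}=\DD_{\WW}^{-1}\WW$, which by Lemma~\ref{the:mapping} is the unbiased random walk on $\WW$. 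So the directed case reduces to the single identity $\PP^\text{BRW}=\DD_{\AA\BB}^{-1}(\AA\BB)$, where $\DD_{\AA\BB}$ is the diagonal matrix of row sums of $\AA\BB$.

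For undirected $G$ the only wrinkle is that $\AA\BB$ need not be symmetric even when $\AA=\AA^{T}$, since $[\AA\BB]_{uv}=a_{uv}b_v\neq a_{vu}b_u=[\AA\BB]_{vu}$ in general; to keep the transformed graph undirected one instead takes $\WW=\BB\AA\BB$, whose entries $[\BB\AA\BB]_{uv}=b_u a_{uv} b_v$ are symmetric in $u$ and $v$. It then suffices to check that inserting the extra left factor does not change the dynamics: each row sum picks up the common factor $b_u$, so $\DD_{\WW}=\BB\,\DD_{\AA\BB}$ and therefore $\DD_{\WW}^{-1}\WW=(\BB\,\DD_{\AA\BB})^{-1}\BB\AA\BB=\DD_{\AA\BB}^{-1}\AA\BB=\PP^\text{BRW}$, the added $\BB$ cancelling exactly. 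I do not expect a genuine obstacle: the whole argument is bookkeeping with diagonal matrices. The one point worth stating carefully is the meaning of ``equivalent'' --- I read it as identical transition operators, hence identical law of the walk at every step rather than merely matching stationary behavior --- which is precisely why, in the undirected branch, I make the cancellation of the extra $\BB$ explicit.
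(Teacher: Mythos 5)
Your proof is correct and follows essentially the same route as the paper's: compute the transition probabilities of the unbiased walk on the transformed graph and verify they coincide with the biased ones, with the extra $b_u$ factor cancelling in the undirected case $\WW=\BB\AA\BB$. Your version is slightly more explicit (stating the cancellation as the matrix identity $\DD_{\WW}=\BB\,\DD_{\AA\BB}$), but the argument is the same.
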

\begin{proof}
Given the transformed graph $\WW$, the transition probability of an unbiased random walk going from vertex $u$ to vertex $v$ is defined as\deleted{,}\deleted[remark=rjd: modified equation]{}
\begin{equation}
\PP'_{uv} = [{\DD_{\WW}}_\text{out}^{-1}\WW]_{uv} = \dfrac{\WW_{uv}}{\sum_{v}\WW_{uv}} \propto b_v a_{uv}\;,
\end{equation}
which is equivalent to the transition probability \replaced{$P_{uv}^\text{BRW} \propto b_v a_{uv}$}{$P_{uv}^{BRW} \propto b_v a_{uv}$} of the biased random walk on the original adjacency matrix $\AA$.

For \added{an }undirected graph, we need to guarantee the transition probabilities are preserved \replaced{in}{on} both directions. Given $\WW = \BB \AA \BB$, we have\deleted[remark=rjd: modified equation]{}
\begin{align}
\PP'_{uv} &= [{\DD_{\WW}}^{-1}\WW]_{uv} = \dfrac{\WW_{uv}}{\sum_{v}\WW_{uv}} \propto \frac{b_v a_{uv} b_u}{b_u}\propto P_{uv}^\text{BRW}\nonumber\\
\PP'_{vu} &= [{\DD_{\WW}}^{-1}\WW]_{vu} = \dfrac{\WW_{vu}}{\sum_{u}\WW_{uv}} \propto \frac{b_v a_{vu} b_u}{b_v}\propto P_{vu}^\text{BRW}\;.
\end{align}
\end{proof}

\subsection{Proof of Lemma \ref{th:transformD}}
\begin{lemma*}[Delay transformation]
Any unbiased \replaced{continuous-time}{continuous time} random walk on $G = (V,E,\AA)$, with the diagonal matrix $\TT$ specifying vertex delay factors $\tau_v$, is equivalent to a \replaced{continuous-time}{continuous time} random walk with \deleted{the }delay factors $\II$ on the transformed graph \replaced{$\WW = \DD_\text{out}(\TT- \II)+\AA$}{$\WW = \DD_{out}(\TT- \II)+\AA$}.
\end{lemma*}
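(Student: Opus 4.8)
The plan is to reduce the claimed equivalence to a single matrix identity between the generators of the two continuous-time processes. The delayed unbiased walk on $G$ is governed, via \eqref{eq:paraP} with $\WW=\AA$, by the generator $\LL = \TT^{-1}(\II - \DD_\text{out}^{-1}\AA)$, while an undelayed walk on the transformed graph $\WW' = \DD_\text{out}(\TT-\II)+\AA$ is governed by $\LL' = \II - {\DD_{\WW'}}_\text{out}^{-1}\WW'$, where ${\DD_{\WW'}}_\text{out}$ is the out-degree matrix of $\WW'$. Since the two trajectories are $\ttheta(t)=\ttheta(0)e^{-\LL t}$ and $\ttheta(t)=\ttheta(0)e^{-\LL' t}$ respectively, it suffices to prove $\LL = \LL'$ as matrices.

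First I would compute ${\DD_{\WW'}}_\text{out}$. The correction term $\DD_\text{out}(\TT-\II)$ is diagonal, so it attaches to vertex $u$ a self-loop of weight $d_u(\tau_u-1)$, which adds to the out-degree of $u$; together with the original out-degree $d_u$ contributed by $\AA$, this yields $[{\DD_{\WW'}}_\text{out}]_{uu} = d_u\tau_u$, i.e. ${\DD_{\WW'}}_\text{out} = \DD_\text{out}\TT$. All matrices in play here ($\DD_\text{out}$, $\TT$, and their inverses) are diagonal, hence commute freely, which makes the remaining manipulation routine.

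Next I would substitute and simplify. Using ${\DD_{\WW'}}_\text{out} = \DD_\text{out}\TT$,
$$
{\DD_{\WW'}}_\text{out}^{-1}\WW' = (\DD_\text{out}\TT)^{-1}\big(\DD_\text{out}(\TT-\II)+\AA\big) = \II - \TT^{-1} + \TT^{-1}\DD_\text{out}^{-1}\AA,
$$
whence
$$
\II - {\DD_{\WW'}}_\text{out}^{-1}\WW' = \TT^{-1} - \TT^{-1}\DD_\text{out}^{-1}\AA = \TT^{-1}(\II - \DD_\text{out}^{-1}\AA) = \LL,
$$
which is exactly the generator of the delayed walk, establishing $\LL = \LL'$ and therefore the equivalence.

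There is no genuinely hard step: the proof collapses to a two-line algebraic identity once the degree matrix of $\WW'$ is correctly identified. The only point that needs a little care is precisely that identification --- recognizing that the diagonal correction $\DD_\text{out}(\TT-\II)$ rescales every out-degree by $\tau_u$, which is exactly what lets the leading $\TT^{-1}$ factor of the delayed Laplacian re-emerge from the degree normalization of $\WW'$. This also pins down the intended meaning of ``equivalent'' (identical generators, hence identical signal trajectories from any common initial condition), and it makes rigorous the accompanying intuition that a delay factor $\tau_u$ is an inverse clock rate that can be absorbed into $\WW$ as a self-loop.
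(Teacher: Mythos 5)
Your proof is correct and is essentially the paper's own argument run in the opposite direction: the paper starts from $\TT^{-1}\DD_\text{out}^{-1}(\DD_\text{out}-\AA)$ and rearranges it into ${\DD_{\WW}}_\text{out}^{-1}({\DD_{\WW}}_\text{out}-\WW)$, while you start from the Laplacian of the transformed graph and simplify back, both hinging on the same identity ${\DD_{\WW}}_\text{out}=\DD_\text{out}\TT$ (which you state explicitly and the paper uses implicitly). No gap; your version just makes the degree-matrix identification and the meaning of ``equivalent'' (equal generators, hence equal trajectories) more explicit.
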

\begin{proof}
 \replaced{Starting}{Assuming we start} from the unbiased random walk Laplacian with the delay factors $\TT$\replaced{, we have}{:}\deleted[remark=rjd: modified equations]{}
\begin{align}
 \TT^{-1} & \DD_\text{out}^{-1}(\DD_\text{out}-\AA)= \II\TT^{-1} -  \TT^{-1}\DD_\text{out}^{-1}\AA\nonumber\\
				    &= \II -  \TT^{-1} (\TT- \II + \DD_\text{out}^{-1} \AA)\nonumber\\
				    &= {\DD_{\WW}}_\text{out}^{-1}{\DD_{\WW}}_\text{out}(\II - \TT^{-1}(\TT- \II) - \TT^{-1} \DD_\text{out}^{-1}\AA)\nonumber\\
				    &= {\DD_{\WW}}_\text{out}^{-1}({\DD_{\WW}}_\text{out} - \DD_\text{out}(\TT- \II) - \AA) \nonumber\\
				    &= \II{\DD_{\WW}}_\text{out}^{-1}({\DD_{\WW}}_\text{out}-\WW)\;,
\end{align}
where the new transformed matrix is \replaced{$\WW = \DD_\text{out}(\TT-\II)+\AA$}{$\WW = \DD_{out}(\TT- \II)+\AA$}\replaced{ and ${\DD_{\WW}}_\text{out}$}{, with ${\DD_{\WW}}_{out}$} represents its diagonal \replaced{out-degree}{out degree} matrix. 
\end{proof}

\subsection{Proof of Lemma \ref{th:discrete-continuous}}
\begin{lemma*}[Discrete approximation]
Given the graph signal $\ttheta(t)$ and the \replaced{continuous-time}{continuous time} Z-Laplacian \replaced{$\TT^{-1}(\II - \ZZ \DD_\text{out}^{-1}\AA)$}{$\TT^{-1}(\II - \ZZ \DD_{out}^{-1}\AA)$}, the graph signal at time $t+\delta$ can be approximated as $\delta\rightarrow 0$\replaced{ by}{,}
$$\lim_{\delta\rightarrow 0}\ttheta(t+\delta) =\ttheta(t) e^{-\LL\delta }  = \ttheta(t)(\II-\lim_{\delta\rightarrow 0}\delta\LL)\;,
$$
where $\II-\delta\LL$ represents a \replaced{discrete-time}{discrete time} filter.
\end{lemma*}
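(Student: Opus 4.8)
The plan is to solve the governing differential equation in closed form and then Taylor-expand the resulting short-time propagator. First I would note that \eqref{eq:general_cont}, $\tfrac{d\ttheta(t)}{dt} = -\ttheta(t)\LL$ with the fixed matrix $\LL = \TT^{-1}(\II - \ZZ \DD_\text{out}^{-1}\AA)$, is a linear constant-coefficient matrix ODE, so by the standard theory its unique solution is $\ttheta(t) = \ttheta(0)\,e^{-\LL t}$, where $e^{-\LL t} = \sum_{k\ge 0}\tfrac{(-\LL t)^k}{k!}$. Using the semigroup identity $e^{-\LL(t+\delta)} = e^{-\LL t}e^{-\LL\delta}$ gives $\ttheta(t+\delta) = \ttheta(t)\,e^{-\LL\delta}$ exactly, for every $\delta$; this is the first equality in the claimed chain.

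Next I would expand the propagator over the short interval: $e^{-\LL\delta} = \II - \delta\LL + \tfrac{\delta^2}{2}\LL^2 - \cdots = \II - \delta\LL + R(\delta)$, where the remainder satisfies $\|R(\delta)\| = \big\|\sum_{k\ge 2}\tfrac{(-\LL\delta)^k}{k!}\big\| \le \delta^2\|\LL\|^2 e^{\delta\|\LL\|}$ in any submultiplicative norm, hence $R(\delta) = O(\delta^2)$. Consequently $\big(e^{-\LL\delta} - (\II-\delta\LL)\big)/\delta \to \boldsymbol{0}$ as $\delta\to 0$, which is the precise sense in which $\ttheta(t+\delta) \approx \ttheta(t)(\II - \delta\LL)$; left-multiplying by the fixed row vector $\ttheta(t)$ preserves the estimate. (In the bare limit both $\ttheta(t+\delta)$ and $\ttheta(t)(\II-\delta\LL)$ collapse to $\ttheta(t)$, so the statement should be read as this first-order agreement, i.e.\ $\II-\delta\LL$ is the correct leading-order discretization of $e^{-\LL\delta}$.)

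Finally I would identify $\II - \delta\LL = \II - \delta\TT^{-1}(\II - \ZZ \DD_\text{out}^{-1}\AA)$ as a genuine discrete-time graph filter: it has the order-one linear shift-invariant form $h_0\II + h_1\SS$ of \cite{sandryhaila_discrete_2013}, with shift built from $\ZZ \DD_\text{out}^{-1}\AA$, and—arguing exactly as in the proof of Theorem \ref{the:general_cont} (taking $\TT = \delta\II$ and $0 \le \delta \le 1/\max_i |[\LL]_{ii}|$ when there are negative diagonal entries)—the matrix $\II - \delta\LL$ is nonnegative, so it is the admissible one-step update operator of a discrete dynamical process, namely the forward-Euler / uniformization surrogate of the continuous Z-Laplacian flow. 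The main obstacle here is not the computation but the bookkeeping: one must state cleanly that the lemma asserts $O(\delta^2)$ Taylor agreement between $e^{-\LL\delta}$ and $\II-\delta\LL$, and that this remainder bound together with the nonnegativity of $\II-\delta\LL$ for small $\delta$ is exactly what licenses the subsequent uniformization expansion \eqref{eq:uniform}.
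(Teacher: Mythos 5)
Your proposal is correct, and it reaches the same conclusion by a more direct route than the paper. The paper's own proof also starts (implicitly) from the exact propagation $\ttheta(t+\delta)=\ttheta(t)e^{-\LL\delta}$, but instead of Taylor-expanding $e^{-\LL\delta}$ directly it factors the exponential through the uniformization form $e^{-\lambda\delta}\sum_{k\ge 0}\tfrac{1}{k!}(\lambda\boldsymbol{\varPhi}\delta)^k$ with $\boldsymbol{\varPhi}=\II-\tfrac{1}{\lambda}\LL$, works component-wise at a fixed vertex $u$, keeps terms to first order in $\delta$, and shows that the difference quotient $[\ttheta(t+\delta)-\ttheta(t)]_u/\delta$ converges to the right-hand side of \eqref{eq:general_cont} before rearranging into the stated form. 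Your version replaces that detour with a clean expansion $e^{-\LL\delta}=\II-\delta\LL+R(\delta)$, an explicit submultiplicative-norm bound $\|R(\delta)\|\le \delta^2\|\LL\|^2e^{\delta\|\LL\|}$, and an honest reading of the (otherwise degenerate) limit as first-order agreement between $e^{-\LL\delta}$ and $\II-\delta\LL$ — a point the paper leaves implicit. You also add something the paper's lemma proof does not: a verification, via the argument of Theorem \ref{the:general_cont}, that $\II-\delta\LL$ is nonnegative for small $\delta$ and hence a legitimate one-step filter. What the paper's longer route buys is that it introduces exactly the objects ($\boldsymbol{\varPhi}$ and the Poisson weights) that are reused immediately afterwards in \eqref{eq:uniform} and in the proof of Theorem \ref{th:interpretation}, so its proof doubles as setup for the delay interpretation; your proof is self-contained and sharper on the analytic content but would need those objects introduced separately. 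One minor note common to both arguments: the semigroup step presumes $\ttheta$ evolves by \eqref{eq:general_cont}, which the lemma statement does not say explicitly, so this is not a gap relative to the paper.
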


\begin{proof}
To better illustrate the role and impact of $\TT$ in the continuous version of the general nonnegative filter, we focus on the dynamics of a specific vertex $u$ over a time interval $\delta$, given by
\begin{align}
[\ttheta&(t+\delta) - \ttheta(t)]_u = [\ttheta(t) (e^{-\LL\delta }-\II)]_u \nonumber\\
&= -[\ttheta(t)]_u + \sum_{v} [\ttheta(t)]_v \left[ e^{-\lambda \delta} \sum_{k=0}^{\infty} \frac{1}{k!} (\lambda \boldsymbol{\varPhi} \delta)^k \right]_{vu}\;,
\end{align}
where $\boldsymbol{\varPhi} = \II-\frac{1}{\lambda}\LL$ is a \replaced{discrete-time}{discrete time} filter. For small $\delta$, we have\deleted[remark=rjd: modified equation]{}
\begin{align}
&[\ttheta(t+\delta) - \ttheta(t)]_u \nonumber\\
			&= -[\ttheta(t)]_u+ \sum_{v} [\ttheta(t)]_v \left[  e^{-\lambda \delta} (\II+\lambda \boldsymbol{\varPhi} \delta + O(\delta^2)) \right]_{vu} \nonumber\\
				&\approx -[\ttheta(t)]_u+ e^{-\lambda \delta}\sum_{v} [\ttheta(t)]_v \left[\II+\lambda (\II - \frac{1}{\lambda}\LL) \delta \right]_{vu}\nonumber\\
				&= \left( e^{-\lambda \delta} (1+\lambda\delta) -1\right) [\ttheta(t)]_u-e^{-\lambda \delta}\sum_{v} [\ttheta(t)]_v \left[\delta\LL\right]_{vu} \;,
\end{align}
where in the first equality we use the \emph{Big}-$O$ notation to capture the asymptotically shrinking quadratic and \replaced{higher-order}{higher order} terms of $\delta$. Therefore,\deleted[remark=rjd: modified equation]{}
\begin{align}
\label{eq:difference}
&\frac{[\ttheta(t+\delta) - \ttheta(t)]_u}{\delta}\nonumber\\
&\approx \left(\frac{e^{-\lambda \delta}-1}{\delta} +\lambda e^{-\lambda \delta}\right) [\ttheta(t)]_u-e^{-\lambda \delta}\sum_{v} [\ttheta(t)]_v \left[\LL\right]_{vu} \;.
\end{align}

It follows that\added{,} in the limit of $\delta\rightarrow0$, \deleted{Equation }\eqref{eq:difference} approaches the differential equation \eqref{eq:general_cont}\replaced{:}{,}
\begin{align}
\lim_{\delta\rightarrow 0}\frac{[\ttheta(t+\delta) - \ttheta(t)]_u}{\delta} = -\sum_{v} [\ttheta(t)]_v \left[\LL\right]_{vu} = \left[\frac{d \ttheta(t)}{dt}\right]_{u}\;.
\end{align}
\replaced{Rearranging terms}{If we rearrange the terms},\deleted[remark=rjd: modified equations]{}
\begin{align}
\lim_{\delta\rightarrow 0}[\ttheta(t+\delta)]_u	&= \lim_{\delta\rightarrow 0} [-\ttheta(t)\delta\LL + \ttheta(t)]_u\nonumber\\
						&= \lim_{\delta\rightarrow 0} [\ttheta(t)(1-\delta\LL)]_u\;,
\end{align}
and
\begin{align}
\lim_{\delta\rightarrow 0}\ttheta(t+\delta)= \ttheta(t)(\II-\lim_{\delta\rightarrow 0}\delta\LL)\;,
\end{align}
where the \replaced{discrete-time}{discrete time} filter $\II-\lim_{\delta\rightarrow 0}\delta\LL$ becomes a close approximation to the \replaced{continuous-time}{continuous time} filter.
\end{proof}

\subsection{Proof of Theorem \ref{th:interpretation}}
\begin{theorem*}[Delay interpretation]
Given the \replaced{continuous-time}{continuous time} Z-Laplacian \replaced{$\TT^{-1}(\II - \ZZ \DD_\text{out}^{-1}\AA)$}{$\TT^{-1}(\II - \ZZ \DD_{out}^{-1}\AA)$} and its \replaced{discrete-time}{discrete time} approximation $\boldsymbol{\varPhi}$, the delay factor \replaced{$[\TT]_{uu}$ of vertex $u$}{of vertex $u$, $[\TT]_{uu}$} is proportional to the expected ``waiting steps'' on vertex $u$ for the approximated \replaced{discrete-time}{discrete time} random walk.
\end{theorem*}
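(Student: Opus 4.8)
The plan is to read the ``waiting steps'' directly off the diagonal of the uniformized discrete chain $\boldsymbol{\varPhi}$. Recall from Lemma~\ref{th:discrete-continuous} (taking $\delta = 1/\lambda$) that $\boldsymbol{\varPhi} = \II - \frac{1}{\lambda}\LL = \II - \frac{1}{\lambda}\TT^{-1}(\II - \ZZ\DD_\text{out}^{-1}\AA)$, and from \eqref{eq:uniform} that $\ttheta(t) = \ttheta(0)\sum_{k=0}^{\infty} P_\text{Poi}(k,t)\,\boldsymbol{\varPhi}^{k}$, so the continuous process is exactly the discrete walk $\boldsymbol{\varPhi}$, with the number of steps taken during $[0,t)$ Poisson-distributed of intensity $\lambda$. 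Off-diagonal entries of $\boldsymbol{\varPhi}$ equal $-\frac{1}{\lambda}[\LL]_{vu} = \frac{1}{\lambda}[\TT^{-1}\ZZ\DD_\text{out}^{-1}\AA]_{vu}\ge 0$ automatically, and the diagonal entries are nonnegative once $\lambda \ge \max_u 1/\tau_u$ (in particular $\lambda \ge 1$ suffices under the standing convention $\tau_u \ge 1$); hence $\boldsymbol{\varPhi}$ is a bona fide transition matrix (sub- or super-stochastic in the rows where $[\ZZ]_{uu}\neq 1$), and asking for the expected number of consecutive steps the walk lingers at $u$ is well posed.

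First I would compute the ``laziness'' probability $p_u := [\boldsymbol{\varPhi}]_{uu}$. Taking $\AA$ to have zero diagonal (i.e.\ representing vertex delay through $\TT$ rather than through explicit self-loops), $\DD_\text{out}^{-1}\AA$ also has zero diagonal, so $[\LL]_{uu} = \tau_u^{-1}[\II - \ZZ\DD_\text{out}^{-1}\AA]_{uu} = \tau_u^{-1}$, giving
\[
p_u \;=\; 1 - \frac{1}{\lambda\,\tau_u}\,.
\]
Next, since $\boldsymbol{\varPhi}$ is time-homogeneous, each step taken while at $u$ either keeps the walk at $u$ (mass $p_u$) or leaves $u$ (the remaining mass $1-p_u = 1/(\lambda\tau_u)$, however it is split among moving to a neighbor, dying, or branching), so the number of consecutive steps spent at $u$ on a visit is geometric with mean $\frac{1}{1-p_u} = \lambda\,\tau_u$ (or $\frac{p_u}{1-p_u} = \lambda\tau_u - 1$ if the departing step is excluded). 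Crucially this depends only on the diagonal $p_u$, so the computation is insensitive to whether the process is conservative. In either convention the expected waiting steps at $u$ is an affine, and for the leading term exactly proportional, function of $\tau_u = [\TT]_{uu}$ with a vertex-independent constant $\lambda$, which is the claim.

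The main obstacle is the bookkeeping around self-loops and the force of the word ``proportional.'' If $\AA$ is permitted to carry its own self-loops, as in Theorem~\ref{the:general_cont}, then $[\LL]_{uu}$ acquires the extra term $-[\ZZ]_{uu}a_{uu}/(\tau_u[\DD_\text{out}]_{uu})$ and the clean identity for $p_u$ breaks; I would resolve this by stating the theorem in the self-loop-free representation, which is always reachable by the delay transformation of Lemma~\ref{th:transformD} (the very operation that trades $\TT$ for self-loops), or alternatively by redefining ``waiting steps'' to lump genuine self-loop transitions together with the uniformization-induced laziness. A secondary point to make explicit is that $\lambda$ is a free global parameter of the uniformization, so the absolute count $\lambda\tau_u$ is only meaningful up to this common scale, which is why the statement says ``proportional'' rather than ``equal''; the cross-vertex ratios of waiting steps are $\lambda$-independent and equal the ratios of the $\tau_u$, and that is where the real content lies.
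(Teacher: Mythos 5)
Your proposal is correct and follows essentially the same route as the paper's proof: decompose $\boldsymbol{\varPhi}$ into the diagonal ``lazy'' part $\II-\delta\TT^{-1}$ plus the scaled walk $\delta\TT^{-1}\ZZ\DD_\text{out}^{-1}\AA$, read off the staying probability $1-\delta/\tau_u$ at vertex $u$, and sum the geometric series to get expected waiting steps $\tau_u/\delta$, proportional to $[\TT]_{uu}$. The only cosmetic difference is the choice of scale (you take $\delta=1/\lambda$ from the uniformization with $\lambda\ge\max_u 1/\tau_u$, the paper takes $\delta=\min_i\tau_i$), and your explicit handling of possible self-loops in $\AA$ is a point the paper leaves implicit.
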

\begin{proof}
\replaced{We rewrite the discrete-time}{Rewriting the discrete time} filter $\boldsymbol{\varPhi}$\replaced{ as}{,}\deleted[remark=rjd: modified equation]{} 
\begin{align}
\label{eq:phi}
 \boldsymbol{\varPhi} &= \II - \delta \TT^{-1}(\II - \ZZ \DD_\text{out}^{-1}\AA)\nonumber\\
         &= (\II-\delta\TT^{-1}) + \delta \TT^{-1}\ZZ \DD_\text{out}^{-1}\AA\;.
\end{align}
Without loss of generality, we assume that \replaced{$\tau_i\le 1$ for all diagonal entries $\tau_i$}{$\tau_{i} \le 1 \forall \tau_{i}$, where the $\tau_i$ are the diagonal entries} of $\TT$. \replaced{Setting}{If we set} $\delta = \min_i \tau_{i}$, it follows that the second term of \deleted{Equation }\eqref{eq:phi} represents a row-wise \replaced{downscaling}{down-scaling} of the original \replaced{expression $\ZZ \DD_\text{out}^{-1}\AA$}{$\ZZ \DD_{out}^{-1}\AA$}, which can be rewritten as a new \replaced{discrete-time}{discrete time} filter given by\deleted[remark=rjd: modified equation]{}
\begin{equation}
 \delta \TT^{-1}\ZZ \DD_\text{out}^{-1}\AA = \ZZ' \DD_\text{out}^{-1}\AA\;.
\end{equation}

The first term of \eqref{eq:phi}, $\II-\delta\TT^{-1}$, with all entries in the real interval $[0,1)$, represents probabilities of self-looping random walks at each vertex. Since this is equivalent to staying at the same vertex with no dynamics, we can estimate the expected ``waiting steps'' at vertex $v$  to be\deleted{remark=rjd: modified equation} 
\begin{equation}
\sum_{k=0}^{\infty} \left(1-\frac{\delta}{\tau_v}\right)^k = \frac{1}{\delta/ \tau_v} = \frac{\tau_v}{\min_i \tau_i}\;,
\end{equation}
and so $\tau_v$ is the ``delay factor'' of vertex $v$.  
\end{proof}

\end{document}